\definecolor{darkgreen}{rgb}{0,0.5,0}
\definecolor{darkblue}{rgb}{0,0,0.8}
\definecolor{darkred}{rgb}{0.8,0,0}
\newcommand\defeq{\stackrel{\mathclap{\normalfont{\mbox{\text{\tiny{def}}}}}}{=}}
\newtheorem{definition}{Definition}[section]
\newtheorem{lemma}[definition]{Lemma}
\newtheorem{theorem}[definition]{Theorem}
\newtheorem{corollary}[definition]{Corollary}
\newcommand{\bigo}{\mathcal{O}}
\newcommand{\ignore}[1]{}
\newcommand{\numlies}{\ensuremath{L}}
\newcommand{\virtPhi}{\widetilde{\Phi}}
\newcommand{\virtPsi}{\widetilde{\Psi}}
\newcommand{\virtLies}[1]{\textup{virt}(#1)}
\newcommand{\w}{\omega}
\newcommand{\Ncap}[1]{N^{\cap}_{#1}}
\title{\bf A Framework for Searching in Graphs\\ in the Presence of Errors}
\author[1]{\Large Dariusz Dereniowski\thanks{Partially supported by National Science Centre (Poland) grant number 2015/17/B/ST6/01887.}}
\author[2]{Stefan Tiegel}
\author[2,3]{Przemys\l{}aw~Uzna\'nski}
\author[2]{Daniel~Wolleb-Graf}
\affil[1]{Faculty of Electronics, Telecommunications and Informatics, Gda\'{n}sk~University~of~Technology,~Poland}
\affil[2]{Department of Computer Science, 
ETH Z\"urich, Switzerland}
\affil[3]{Institute of Computer Science, Faculty of Mathematics and Computer Science, University~of~Wrocław,~Poland}
\date{}
\begin{document}
\maketitle

\setcounter{page}{0}
\thispagestyle{empty}

\begin{abstract}
We consider a problem of searching for an unknown target vertex $t$ in a (possibly edge-weighted) graph. Each \emph{vertex-query} points to a vertex $v$ and the response either admits that $v$ is the target or provides any neighbor $s$ of $v$ that lies on a shortest path from $v$ to $t$. This model has been introduced for trees by Onak and Parys [FOCS 2006] and for general graphs by Emamjomeh-Zadeh et al. [STOC 2016]. In the latter, the authors provide algorithms for the error-less case and for the independent noise model (where each query independently receives an erroneous answer with known probability $p<1/2$ and a correct one with probability $1-p$).
 
We study this problem both with adversarial errors and independent noise models. First, we show an algorithm that needs at most $\frac{\log_2 n}{1 - H(r)}$ queries in case of \emph{adversarial} errors, where the adversary is bounded with its rate of errors by a known constant $r<1/2$. Our algorithm is in fact a simplification of previous work, and our refinement lies in invoking an amortization argument. We then  show that our algorithm coupled with a Chernoff bound argument leads to a simpler algorithm for the independent noise model and has a query complexity that is both simpler and asymptotically better than the one of Emamjomeh-Zadeh et al. [STOC 2016].

Our approach has a wide range of applications. First, it improves and simplifies the Robust Interactive Learning framework proposed by Emamjomeh-Zadeh and Kempe [NIPS 2017]. Secondly, performing analogous analysis for \emph{edge-queries} (where a query to an edge $e$ returns its endpoint that is closer to the target) we actually recover (as a special case) a noisy binary search algorithm that is asymptotically optimal, matching the complexity of  Feige et al. [SIAM J. Comput. 1994]. Thirdly, we improve and simplify upon an algorithm for searching of \emph{unbounded} domains due to Aslam and Dhagat [STOC 1991].
\end{abstract}
\vfill

\clearpage

\section{Introduction} \label{sec:intro}

Consider the following game played on a simple connected graph $G=(V,E)$:
\begin{quote}
 Initially, the \emph{Responder} selects a \emph{target} $v^*\in V$.
 In each \emph{round}, the Questioner asks a \emph{vertex-query} by pointing to a vertex $v$ of $G$, and the Responder provides a \emph{reply}.
 The reply either states that $v$ is the target, i.e., $v=v^*$, or provides an edge incident to $v$ that lies on a shortest path to the target, breaking ties arbitrarily.
 A specific number of replies can be erroneous (we call them \emph{lies}).
 The goal is to design a strategy for the Questioner that identifies $v^*$ using as few queries as possible.
\end{quote}
We remark that this problem is known, among several other names, as R\'{e}nyi-Ulam games~\cite{Renyi61,Ulam76}, \emph{noisy binary search} or \emph{noisy decision trees}~\cite{FeigeRPU94,KarpK07,Ben-OrH08}.
One needs to put some restriction as how often the Responder is allowed to lie.
Following earlier works, we focus on the most natural probabilistic model, in which each reply is independently correct with a certain fixed probability.

This problem has interesting applications in noisy interactive learning~\cite{Angluin87,Emamjomeh-ZadehK17,KearnsV94,Littlestone87,2012Settles}.
In general terms, the learning process occurs as a version of the following scheme.
A user is presented with some \emph{information} --- this information reflects the current state of knowledge of the system and should take into account earlier interactions with the user (thus, the process is interactive).
Then, the user responds, which provides a new piece of data to the system. 
In order to model such dynamics as our problem, one needs to place some rules: what the information should look like and what is allowed as a valid user's response.
A crucial element in those applications is the fact that the learning process (reflected by queries and responses) does not require an explicit construction of the underlying graph on which the process takes place.
Instead, it is enough to argue that there \emph{exists} a graph whose vertices reflect possible states.
Moreover, this graph needs to have the property that a valid user's response reveals an edge lying on a shortest path to the state that needs to be determined by the system.
Specific applications pointed out in~\cite{Emamjomeh-ZadehK17} are the following.
In \emph{learning a ranking} the system aims at learning user's preference list \cite{RadlinskiJ05,Liu11}.
An information presented to the user is some list, and as a response the user swaps two consecutive elements on this list which are in the wrong order with respect to the user's target preference list.
Or, the response may reveal which element on a presented list has the highest rank.
Both versions of the response turn out to be consistent with our graph-theoretic game over a properly defined graph, whose vertex set is the set of all possible preference lists.
Another application is \emph{learning a clustering}, where the user's reply tells the system that in the current clustering some cluster needs to be split (the reply does not need to reveal how) or two clusters should be merged \cite{AwasthiBV17,BalcanB08}.
Yet another application includes learning a binary classifier.
The strength that comes from a graph-theoretic modeling of those applications as our game is that, although the underlying graph structure has usually exponential number of vertices (for learning a ranking it is $l!$, where $l$ is the maximum length of the preference list), the number of required queries is asymptotically logarithmic in this size \cite{Emamjomeh-Zadeh:2015aa,Emamjomeh-ZadehK17}.
Thus, the learning strategies derived from the algorithms in~\cite{Emamjomeh-Zadeh:2015aa} and~\cite{Emamjomeh-ZadehK17} turn out to be quite efficient.
We stress out that the lies in the query game reflect the fact that the user may sometimes provide incorrect replies.
We also note that any improvement of those algorithms, at which we aim in this work, leads to immediate improvements in the above-mentioned applications.

In~\cite{Emamjomeh-Zadeh:2015aa}, the authors provide an algorithm with the following \emph{query complexity}, i.e., the worst-case number of vertex-queries:
\begin{equation} \label{eq:EKS}
\frac{1}{1-H(p)}\left(\log_2 n+\bigo(\frac{1}{C}\log n + C^2 \log \delta^{-1})\right), \text{where } C=\max\left((\frac{1}{2}-p) \sqrt{\log \log n},1\right)
\end{equation}
that identifies the target with probability at least $1-\delta$, where $n$ is the number of vertices of an input graph and $H(p)=-p\log p-(1-p)\log(1-p)$ is the entropy and $p$ is the success probability of a query.
It is further observed that when $p<1/2$ is constant (w.r.t.~to $n$), then \eqref{eq:EKS} reduces to 
$\frac{\log_2 n}{1-H(p)} + o(\log n) + \bigo(\log^2 \delta^{-1}).$ $\frac{\log_2 n}{1-H(p)} + o(\log n) + \bigo(\log^2 \delta^{-1}).$ However, this complexity deteriorates when $\frac12 - p \ll \frac{1}{\sqrt{\log \log n}}$, and then \eqref{eq:EKS} becomes $\bigo(\frac{1}{1-H(p)}(\log n + \log \delta^{-1}))$.

\subsection{Our Contribution --- Improved Query Complexity}
In our analysis, we first focus on an \emph{adversarial} model called \emph{linearly bounded}, in which a rate of lies $r<1/2$ is given at the beginning of the game and the Responder is restricted so that at most $rt$ lies occur in a game of length $t$.
It turns out that this model is easier to analyze and leads to the following theorem whose proof is postponed to Section~\ref{sec:noisy}.
\begin{theorem}
\label{th:vertex_lin_bounded}
In the linearly bounded error model, with known error rate $r<1/2$, the target can be found in at most $\frac{\log_2 n}{1-H(r)}$ vertex queries.
\end{theorem}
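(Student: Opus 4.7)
The plan is to run a potential-function argument in the spirit of Berlekamp--Massey weights for noisy binary search, but tailored to vertex queries in a graph. For each vertex $u$, let $\ell_t(u)$ denote the number of the first $t$ queries whose reply is inconsistent with $u$ being the target, and set $\phi_t(u) = \alpha^{\ell_t(u)}$ for a parameter $\alpha \in (0,1)$ to be fixed later; the overall potential is $\Phi_t = \sum_{u \in V}\phi_t(u)$, so $\Phi_0 = n$.

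In each round I would pick a ``weighted $1$-median'' vertex $v$, i.e., one for which every possible reply (``$v = v^*$'' or a neighbor $s$ of $v$ on a shortest path to $v^*$) leaves the set of consistent vertices with total weight at most $\Phi_t/2$; the existence of such a $v$ is the graph-theoretic lemma of Emamjomeh-Zadeh et al.\ that I would invoke as a black box. Upon receiving the reply, weights of consistent vertices are preserved while those of inconsistent vertices are scaled by $\alpha$, hence
\[
\Phi_{t+1} \;\le\; \tfrac{1}{2}\Phi_t + \alpha \cdot \tfrac{1}{2}\Phi_t \;=\; \frac{1+\alpha}{2}\,\Phi_t,
\]
and by induction $\Phi_T \le n\bigl(\tfrac{1+\alpha}{2}\bigr)^T$.

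The amortized observation (the refinement over prior work) is that we never cull vertices from the weight sum based on transient lie counts; we only exploit the final-time guarantee $\ell_T(v^*) \le rT$ that is promised by the linearly bounded adversary. Consequently $\phi_T(v^*) \ge \alpha^{rT}$ (using $\alpha \le 1$), and the same lower bound holds for every remaining \emph{candidate}, i.e., every vertex $u$ with $\ell_T(u) \le rT$. Hence the number of candidates is at most $\Phi_T/\alpha^{rT} \le n\bigl(\tfrac{1+\alpha}{2\alpha^r}\bigr)^T$. Setting $\alpha = r/(1-r)$, which is the maximizer of the base, and substituting into the definition of binary entropy yields the clean identity $\tfrac{2\alpha^r}{1+\alpha} = 2^{1-H(r)}$; so the candidate count is bounded by $n\cdot 2^{-(1-H(r))T}$, which becomes $\le 1$ precisely at $T = \log_2 n/(1-H(r))$, forcing the sole surviving candidate to be $v^*$.

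The main obstacle is not the potential bookkeeping but the existence of the weighted median, which is a genuinely graph-theoretic statement whose proof I would simply quote from Emamjomeh-Zadeh et al.\ The only remaining subtlety is the choice of $\alpha$, which is dictated by solving $\tfrac{d}{d\alpha}\log\tfrac{1+\alpha}{2\alpha^r}=0$ and gives $\alpha = r/(1-r)$; the identity $\tfrac{2\alpha^r}{1+\alpha}=2^{1-H(r)}$ at this value is then a direct algebraic check.
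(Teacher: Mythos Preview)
Your argument has a genuine gap at the claimed per-step contraction. The lemma of Emamjomeh-Zadeh et al.\ that you invoke as a black box only asserts that for the weighted $1$-median $q$ and any \emph{no}-answer (a neighbor $s$), the consistent set $N(q,s)$ has weight at most $\Phi_t/2$; it says nothing about the yes-answer. If the $1$-median $q$ is itself $\tfrac12$-heavy, i.e.\ $\phi_t(q) > \Phi_t/2$, then a yes-answer leaves $\{q\}$ as the consistent set with weight exceeding $\Phi_t/2$, and your inequality $\Phi_{t+1} \le \tfrac{1+\alpha}{2}\Phi_t$ fails outright. Nor can you sidestep this by querying a different vertex: once some $q$ is $\tfrac12$-heavy it is the unique $1$-median, and for any other vertex $u$ the no-answer pointing toward $q$ has a consistent set containing $q$, hence also of weight $>\Phi_t/2$. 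Heavy vertices do arise under multiplicative reweighting, and repeated yes-answers to such a $q$ can stall the potential almost completely; this is exactly why the prior work of Emamjomeh-Zadeh et al.\ set heavy vertices aside for separate treatment.

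This is precisely the obstruction the paper's proof is built to handle. The paper gets the contraction directly for no-answers (Lemma~\ref{lem:no}) and for yes-answers to non-heavy medians (Lemma~\ref{lem:yes}), and then disposes of the remaining case by an amortized \emph{pairing}: inside any maximal interval during which some $q$ stays $\tfrac12$-heavy, each yes-answer is matched with a no-answer (there are at least as many), and such a pair multiplies the weight by at most $1/\Gamma < \bigl(\tfrac{1+\Gamma}{2\Gamma}\bigr)^2$. What you call the ``amortized observation''---not culling vertices mid-game and only invoking $\ell_T(v^*)\le rT$ at the end---is correct but routine; the actual amortization needed is the one that rescues the per-step contraction, and that is missing from your proposal. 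Your choice $\alpha = r/(1-r)$ and the final candidate-counting step are fine and match the paper's endgame; what is absent is the core content of Theorem~\ref{th:vertex_queries_exact}.
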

This bound is strong enough to make an improvement in the probabilistic model.
By a simple application of Chernoff bound, we get the following query complexity. 
\begin{theorem} \label{th:vertex-probabilistic}
In the probabilistic error model with error probability $p<1/2$, the target can be found using at most
\[\frac{1}{1 - H(p)}\left(\log_2 n + \bigo(\sqrt{\log n \log \delta^{-1}} \cdot \log \frac{\log n}{\log \delta^{-1}}) + \bigo(\log \delta^{-1})\right)\]
vertex queries, correctly with probability at least $1-\delta$.
\end{theorem}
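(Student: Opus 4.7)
The plan is to reduce the probabilistic model to the adversarial linearly bounded model of Theorem~\ref{th:vertex_lin_bounded} through a Chernoff-bound coupling. Fix a slack $\epsilon>0$ to be chosen later, set $r := p + \epsilon$, and invoke the adversarial algorithm with error rate $r$, which terminates after $T := \lceil \log_2 n / (1 - H(r)) \rceil$ queries. That algorithm returns the target whenever, for every prefix length $t \leq T$, the number $S_t$ of erroneous replies satisfies $S_t \leq rt$. In the probabilistic model $S_t$ is a sum of $t$ i.i.d.\ $\mathrm{Bernoulli}(p)$ variables, so the task reduces to showing that the bad event $\{\exists\, t \leq T : S_t > rt\}$ has probability at most $\delta$.

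To control this event I would partition $\{1,\ldots,T\}$ into dyadic intervals $I_k = [2^k, 2^{k+1})$ and apply a maximal Chernoff/Doob inequality on each $I_k$ to the supermartingale $S_t - rt$. Each interval contributes a tail of order $\exp(-\Omega(\epsilon^2 \cdot 2^k))$, and summing over $k \leq \log_2 T$ yields a bound dominated by the smallest $k$ for which $\epsilon^2 \cdot 2^k$ exceeds $\log \delta^{-1}$. For prefixes shorter than this threshold the Chernoff tail is too weak, so I would grant the algorithm an additive buffer of $\bigo(\log \delta^{-1})$ tolerated lies---operationally, allowing the constraint $S_t \leq rt + \bigo(\log \delta^{-1})$, which the adversarial algorithm can absorb at the price of $\bigo(\log \delta^{-1})$ extra queries in $T$. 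This buffer is the source of the additive $\bigo(\log \delta^{-1})$ term in the theorem, while the dyadic union contributes the $\log(\log n / \log \delta^{-1})$ factor inside the middle term.

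The final step is to tune $\epsilon$ to minimize the total. A first-order expansion of $H$ around $p$ gives $\frac{1}{1 - H(r)} = \frac{1}{1-H(p)}(1 + \bigo(\epsilon))$, so
\[
T = \frac{\log_2 n}{1 - H(p)} + \bigo(\epsilon \log n).
\]
Choosing $\epsilon = \Theta(\sqrt{\log\delta^{-1}/\log n})$ balances the Chernoff requirement against the leading term and yields the advertised $\bigo(\sqrt{\log n \log \delta^{-1}})$ contribution. The main obstacle I expect is the uniform-in-$t$ deviation: Theorem~\ref{th:vertex_lin_bounded} demands a running, not merely terminal, constraint, and extracting the sharp $\log(\log n / \log \delta^{-1})$ factor from the dyadic union bound---rather than a naive $\log T$---together with the short-prefix buffer trick, is the delicate accounting; the remainder is a direct application of the adversarial theorem.
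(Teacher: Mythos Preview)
Your reduction to the linearly bounded model is the right idea, but two things go wrong, one cosmetic and one substantive.

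First, the linearly bounded model in this paper is a \emph{terminal} constraint, not a prefix constraint: the Responder may lie at most $rQ$ times over the whole game of length $Q$ (the prefix version is a separate model treated in the appendix). Concretely, Theorem~\ref{th:vertex_lin_bounded} is proved by running Algorithm~\ref{str:graph_vertices_fixed} with a \emph{fixed} lie budget $L = rQ$, so correctness only needs the total number $S_Q$ of erroneous replies to satisfy $S_Q \le rQ$. Hence a single Hoeffding bound on $S_Q$ suffices; the dyadic decomposition, maximal inequalities, and short-prefix buffer are all unnecessary. What you flagged as the ``main obstacle'' is not actually present.

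Second, and more seriously, your entropy accounting does not give the stated bound uniformly in $p$. With the additive slack $r = p+\epsilon$ and $\epsilon = \Theta(\sqrt{\log\delta^{-1}/\log n})$ independent of $p$, you may already have $r \ge 1/2$ when $p$ is close to $1/2$. Even when $r<1/2$, the first-order expansion
\[
\frac{1}{1-H(r)} = \frac{1}{1-H(p)}\Bigl(1 + \frac{H'(p)\,\epsilon}{1-H(p)} + \cdots\Bigr)
\]
has the hidden coefficient $H'(p)/(1-H(p)) \asymp (1/2-p)^{-1}$, so the overhead you write as ``$\bigo(\epsilon\log n)$'' is really $\Theta\bigl(\epsilon\log n/(1/2-p)\bigr)$ and cannot be absorbed into the claimed middle term with an absolute constant. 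The paper avoids this by using a \emph{multiplicative} slack: writing $p=\tfrac12(1-\varepsilon)$ and $r=\tfrac12(1-\alpha\varepsilon)$ with $\alpha = 1/(1+\sqrt{2\ln\delta^{-1}/\ln n})$, and then proving the convexity inequality
\[
\frac{F(\varepsilon)}{F(\alpha\varepsilon)} \le \frac{1}{F(\alpha)}, \qquad F(x)=1-H\bigl(\tfrac12(1-x)\bigr),
\]
which bounds $(1-H(p))/(1-H(r))$ by a quantity depending only on $\alpha$, hence only on $n$ and $\delta$. The $\log(\log n/\log\delta^{-1})$ factor in the theorem comes from the Taylor expansion of $1/F(\alpha)$ near $\alpha=1$, not from any union bound over prefixes.
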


\paragraph{Simplifying the bound.}
For any $A,B$ it holds that $\sqrt{AB} \log \frac{A}{B} \le \sqrt{AB} \log (AB) = \bigo(A/\log A) + \bigo(B \log^3 B)$.\footnote{If $A<B$, then $\sqrt{AB} \le B$. If $B < A^{0.8}$, then $\sqrt{AB}\log(AB) = \bigo(A^{0.9}\log A) = \bigo(A/\log A)$. Otherwise, $A/\log A + B \log^3 B \ge 2 \sqrt{AB \frac{\log^3 B}{\log A}} =  \Theta(\sqrt{AB}\log(AB))$.}
We thus derive a query complexity of
\[\frac{1}{1 - H(p)}\Big(\log_2 n + o(\log n) + \bigo(\log \delta^{-1} \log^3 \log \delta^{-1})\Big).\]

\paragraph{Error comparison with \cite{Emamjomeh-Zadeh:2015aa}.}
We compare, in the independent noise model, the precise query complexities of~\cite{Emamjomeh-Zadeh:2015aa}, i.e.~\eqref{eq:EKS} with Theorem~\ref{th:vertex-probabilistic}. 
Observe that 
\begin{align*}
\log n \cdot \frac{1}{C}+ \log \delta^{-1} \cdot C^2 &\ge ( C^{-1} \log n)^{2/3} (C^2 \log \delta^{-1} )^{1/3}\\
 &=  \sqrt{\log n \log \delta^{-1}} \cdot \left(\frac{\log n}{\log \delta^{-1}}\right)^{1/6}
 \end{align*}
  and that  $\log \delta^{-1} \cdot C^2 \ge  \log \delta^{-1}$ (since $C \ge 1$).
Thus, our bound from Theorem~\ref{th:vertex-probabilistic} for all ranges of parameters asymptotically improves the one in \eqref{eq:EKS}.

Note that the compared bounds are with respect to worst-case strategy lengths. Our bounds can be made in expectation smaller by a factor of roughly $1-\delta$ using the same techniques as in  \cite{Ben-OrH08} and  \cite{Emamjomeh-Zadeh:2015aa}.

\subsection{Our Contribution --- Simplified Algorithmic Techniques}
The crucial underlying idea behind the algorithm from \cite{Emamjomeh-Zadeh:2015aa} that reaches the query complexity in~\eqref{eq:EKS} is as follows.
The algorithm maintains a weight function $\mu$ for the vertex set of the input graph $G=(V,E)$ so that, at any given time, $\mu(v)$ represents the likelihood that $v$ is the target.
Initially, all vertices have the same weight.
For a given $\mu$, define a \emph{potential} of a vertex $v$ to be $\Phi_{\mu}(v)=\sum_{u\in V}\mu(u)d(v,u)$, where $d(u,v)$ is the distance between the vertices $u$ and $v$ in $G$.
A vertex $q$ that minimizes this potential function is called a \emph{weighted median}, or a \emph{median} for short, $q=\arg\min_{v\in V}\Phi_{\mu}(v)$.
The vertex to be queried in each iteration of the algorithm is a median (ties are broken arbitrarily).
After each query, the weights are updated: the weight of each vertex that is compatible with the reply is multiplied by $p$, and the weights of the remaining vertices are multiplied by $1-p$.

The above scheme for querying subsequent vertices is the main building block of the algorithm that reaches the query complexity in~\eqref{eq:EKS}.
However, the analysis of the algorithm reveals a problematic case, namely the vertices that account for at least half of the total weight, call them \emph{heavy}.
On one side, such vertices are good candidates to include the target, so they are `removed' from the graph to be investigated later.
However, the need to investigate them in this separate way leads to an algorithm that has three phases, where the first two end by trimming the graph by leaving only the heavy vertices for the next phase.
The first two phases are sequences of vertex queries performed on a median.
The last phase uses yet a different majority technique.
The duration of each of the first two phases are dictated by complicated formulas, which makes the algorithm difficult to analyze and understand.

We propose a simpler algorithm than the one in~\cite{Emamjomeh-Zadeh:2015aa}.
In each step, we simply query a median until just one candidate target vertex remains.
Our improvement lies in a refined analysis in how such a query technique updates the weights, which has several advantages.
It not only leads to a better query complexity but also provides a much simpler proof.
Moreover, it results in a better understanding as how querying a median works in general graphs.
We point out that this technique is quite general: it can be successfully applied to other query models --- the details can be found in the appendix.

\subsection{Related Work}

Regarding the problem of searching in graphs without errors, many papers have been devoted to trees, mainly because it is a structure that naturally generalizes paths, which represents the classical binary search (see e.g.~\cite{LaberMP01} for search in a path with non-uniform query times).
This query model in case of trees is equivalent to several other problems, including vertex ranking~\cite{Dereniowski08} or tree-depth~\cite{NesetrilM06}.
%There exist linear-time algorithms for solving both vertex- and edge-query problems~\cite{LamY01,MozesOW08,OnakP06,Schaffer89}.
There exist linear-time algorithms for finding optimal query strategies~\cite{OnakP06,Schaffer89}.
A lot of effort has been done to understand the complexity for trees with non-uniform query times.
It turns out that the problem becomes hard for trees~\cite{DereniowskiN06,DereniowskiKUZ17}.
Also refer the reader to works on a closely related query game with edge queries \cite{CicaleseJLV12,CicaleseKLPV16,Dereniowski06,LamY01,MozesOW08}.
For general graphs, a strategy that always queries a 1-median (the minimizer of the sum of distances over all vertices) has length at most $\log_2 n$~\cite{Emamjomeh-Zadeh:2015aa}.

To shift our attention to searching in graphs with errors, two works have been recently published on probabilistic models \cite{Emamjomeh-Zadeh:2015aa,Emamjomeh-ZadehK17}.
These models are further generalized in~\cite{Deligkas:2017aa} by considering the case of identifying two targets $t_1$ and $t_2$, where each answer to a query gives an edge on a shortest path to $t_1$ with probability $p_1$ or to $t_2$ with probability $p_2 = 1 - p_1$, respectively.
Furthermore, there exists a closely related model in which the search is restricted in such a way, that each query performed to a vertex $v$ must be followed by a vertex query to one of its neighbors --- see~\cite{BoczkowskiKR16,HanusseIKN10,HanusseKK04,HanusseKKK08,KranakisK99} --- in this context errors are usually referred to as unreliable advice.

An extensive amount of work has been devoted to searching problems in the presence of lies in a non-graph-theoretic context.
The main tool of analysis is the concept of \emph{volume} introduced by Berlekamp~\cite{Berlekamp68} --- see also \cite{Cicalese:2013:FSA:2568443,Deppe2007} for a more detailed descriptions.
We skip references to very numerous works that deal with fixed number of lies, pointing to surveys in~\cite{Cicalese:2013:FSA:2568443,Deppe2007,PELC200271}.
For general queries, it is known~\cite{rivest1980coping} that a strategy of length $\log n + \numlies \log_2 \log_2 n + \mathcal{O}(\numlies \log \numlies)$ exists, where $n$ is the size of the search space and $\numlies$ is an upper bound on the number of lies.
An almost optimal approximation strategy can be found in~\cite{Muthukrishnan94}, which is actually given for a more general model of $q$-ary queries.
For the most relevant model in our context, the probabilistic model, we remark on the early works, which bound strategy lengths to $\bigo(\frac{1}{\text{poly}(\varepsilon)} \log n \log \delta^{-1})$, where $p < \frac{1}{2}$ and $\varepsilon=\frac{1}{2}-p$, with confidence probability $1-\delta$~\cite{aslam1995noise,BorgstromK93}.
A strategy of length $\bigo(\varepsilon^{-2}(\log n+\log \delta^{-1}))$ is given in~\cite{FeigeRPU94}.
Finally,~\cite{Ben-OrH08} gives the best known bound of $\frac{1}{1-H(p)}(\log_2 n+\bigo(\log\log n)+\bigo(\log \delta^{-1}))$.
We note that we arrive at a strategy matching asymptotically the complexity of \cite{FeigeRPU94} as a by-product from our graph-theoretic analysis (presented in the appendix).

\section{Preliminaries} \label{sec:preliminaries}
We now introduce the notation regarding the dynamics of the game. %, which is used both for vertex and edge queries.
We assume an input graph with non-uniform edge lengths, and we denote said lengths by $\w(e)$.
We denote by $d(u,v)$ the \emph{distance} between two vertices $u$ and $v$, which is the length of a shortest path in $G$ between $u$ and $v$.
We first focus on a simplified error model where the Responder is allowed a fixed number of lies, with the upper bound denoted as $\numlies$.
During the game, the Questioner keeps track of a \emph{lie counter} $\ell_v$ for each vertex $v$ of $G$.
The value of $\ell_v$ equals the number of lies that must have already occurred assuming that $v$ is actually the target $v^*$. 
The Questioner will utilize a constant $\Gamma > 1$ that will be fixed later.
The goal of having this parameter is that we can tune it in order to obtain the right asymptotics.
We define a \emph{weight} $\mu_t(v)$ of a vertex $v$ at the end of a round $t>0$:
\[\mu_t(v)=\mu_0(v)\cdot\Gamma^{-\ell_v},\]
where $\mu_0(v)$ is the initial weight of $v$.
For subsets $U\subseteq V$, let $\mu(U) = \sum_{v\in U} \mu(v)$.
For brevity we write $\mu_t$ in place of $\mu_t(V)$.
For a queried vertex $q$ and an answer $v$, a vertex $u$ is \emph{compatible} with the answer if $u=v$ when $q=v$, or $v$ lies on a shortest path from $q$ to $u$.

As soon as there is only one vertex $v$ left with $\ell_v \le \numlies$, the Questioner can successfully detect the target, $v^*=v$.
We will set the initial weight of each vertex $v$ to be $\mu_0(v)=1$.
Thus, $\mu_0 = n$ and $\mu_T \geq \Gamma^{-\numlies}$ if the strategy length is $T$.

Based on the weight function $\mu$, we define a \emph{potential} of a vertex $v$:
\begin{equation*}
\Phi(v) = \sum_{u \in V} \mu(u) \cdot  d(v,u).
\end{equation*}
We write $\Phi_t(v)$ to refer to the value of a potential at the end of round $t$.
Any vertex $x \in V$ minimizing $\Phi(x)$ is called \emph{$1$-median}.

Denote for an edge $\{v,u\}$, $N(v,u) = \{x \mid d(u,x)+\w(\{v,u\}) = d(v,x)\}$ to be the set of all vertices to which some shortest path from $v$ leads through $u$.
Thus, $N(v,u)$ consists of the compatible vertices for the answer $u$ when $v$ has been queried.
For any $S\subseteq V$, we write for brevity $\overline{S}=V\setminus S$, and for singletons $\overline{\{v\}}$ we further shorten to $\overline{v}$.
We say that a vertex $v$ is \emph{$\alpha$-heavy}, for some $0 \le \alpha \le 1$, if $\mu(v) > \alpha \cdot \mu(V)$.
For a queried vertex $q$, if the answer is $q$, then such a reply is called a \emph{yes-answer}; otherwise it is called a \emph{no-answer}.

\section{Vertex Searching} \label{sec:generic-statement}
We now formally state the search strategy for a fixed number of lies --- see Algorithm~\ref{str:graph_vertices_fixed}.
We combine our weight together with the idea of querying a $1$-median~\cite{Emamjomeh-Zadeh:2015aa}.
%Our main contribution is an amortized analysis that provides an exact bound on the number of vertex queries.
As announced earlier, it turns out that our bound together with an appropriately selected weight function are strong enough so that we do not need the additional stages enhanced with a majority selection used in~\cite{Emamjomeh-Zadeh:2015aa} in order to gain asymptotic improvements.
We also note that we can easily introduce technical modifications to this strategy by changing the initial weight, the value of $\Gamma$ or the stopping condition.
We will do this to conclude several results for various error models (see the appendix).

\begin{figure}
\begin{center}
\begin{minipage}{.7\linewidth}
\begin{algorithm}[H]
\SetAlgoRefName{VERTEX}
	\caption{Vertex queries for a fixed number of $\numlies$ lies.}
	\label{str:graph_vertices_fixed}
	\For{ $v \in V$ }
	{
		$\mu(v) \gets 1$\;
		$\ell_v \gets 0$
	}
	\While{more than one vertex $x \in V$ has $\ell_x \leq \numlies$}
	{
		$q \gets \arg \min_{x \in V} \Phi(x)$\;
		query the vertex  $q$\;
		\For{all nodes $u$ not compatible with the answer}
		{
			$\ell_u \gets \ell_u + 1$\;
			$\mu(u) \gets \mu(u)/\Gamma$
		}
	}
	\Return the only $x$ such that $\ell_x \leq \numlies$
\end{algorithm}
\end{minipage}
\end{center}
\end{figure}

\subsection{Analysis of the Strategy} \label{sec:proofs}
In this subsection we prove the following main technical contribution.
\begin{theorem}
\label{th:vertex_queries_exact}
Algorithm~\ref{str:graph_vertices_fixed} finds the target in at most
\[\frac{ 1}{\log_2(2\Gamma/(\Gamma+1))} \log_2 n + \frac{\log_2 \Gamma}{\log_2(2\Gamma/(\Gamma+1))} \cdot \numlies\]
vertex queries.
\end{theorem}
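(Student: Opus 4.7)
I would prove the theorem by tracking the total weight $\mu_t := \mu_t(V)$ across the rounds of the algorithm. From $\mu_0 = n$ and the lower bound $\mu_T \geq \Gamma^{-\numlies}$ (valid because the target's counter $\ell_{v^*}$ stays at most $\numlies$), the estimate $\mu_{t+1} \leq \mu_t / \beta$ for $\beta := 2\Gamma/(\Gamma+1)$ would immediately give $\beta^T \leq n\Gamma^{\numlies}$, which rearranges to the bound claimed by the theorem. The central tool for obtaining this decay is the following structural lemma: for the $1$-median $q$ and any neighbor $u$ of $q$, $\mu(N(q,u)) \leq \mu(V)/2$. I would prove it by splitting $0 \leq \Phi(u) - \Phi(q)$ across $x \in N(q,u)$ (where $d(u,x) - d(q,x) = -\w(\{q,u\})$) and $x \notin N(q,u)$ (where $d(u,x) - d(q,x) \leq \w(\{q,u\})$ by the triangle inequality), to get $0 \leq \w(\{q,u\}) \bigl(\mu(V) - 2\mu(N(q,u))\bigr)$.

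With the lemma in hand, the easy rounds are immediate: after any query with total compatible weight $c$, the update rule gives $\mu_{\text{new}} = c + (\mu - c)/\Gamma$, which is at most $\mu/\beta$ whenever $c \leq \mu/2$. This handles every no-answer (where $c = \mu(N(q,u)) \leq \mu/2$ by the lemma) and every yes-answer at a $1$-median $q$ that is not $\frac12$-heavy (where $c = \mu(q) \leq \mu/2$ by assumption). The single remaining case, a yes-answer at a $\frac12$-heavy $1$-median, is the core obstacle: here $\mu_{\text{new}}/\mu$ can be made arbitrarily close to $1$, so a naive per-round $\beta$-decay fails.

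I would absorb the heavy-yes case via the amortization hinted at by the preamble's virtual-lies macros. Introduce a virtual counter $\virtLies{v} \geq \ell_v$ for each vertex and the virtual weight $\virtPsi = \sum_v \Gamma^{-\virtLies{v}}$, and in every heavy-yes round charge an additional virtual lie to the queried $q$. A short calculation then yields $\virtPsi_{\text{new}} = \virtPsi/\Gamma \leq \virtPsi/\beta$ in such rounds, while the easy rounds still give a factor-$\beta$ decrease of $\virtPsi$. The final endpoint estimate $\virtPsi_T \geq \Gamma^{-\numlies}$ requires that virtual lies do not accumulate too fast at the true target $v^*$: the idea is that a heavy-yes round at $q = v^*$ must be truthful, and such rounds form a terminal run of identical queries that ends once every non-target vertex has had $\ell$ pushed above $\numlies$. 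The hard part is precisely this last step: setting up the amortization so that (i) the $1$-median property still yields the desired factor-$\beta$ drop for $\virtPsi$ in the easy rounds (which is not automatic, since the algorithm's median is defined using $\mu$, not $\virtPsi$) and (ii) the virtual lies charged to $v^*$ remain uniformly bounded by a constant independent of the run length. This is the subtlety that, in~\cite{Emamjomeh-Zadeh:2015aa}, necessitated a three-phase algorithm with a separate majority stage, and that the present paper folds into the single-loop analysis.
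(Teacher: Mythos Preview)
Your framework is correct up to the point where you isolate the heavy-yes obstacle, and the median lemma and the two easy cases match the paper exactly. The gap is in the amortization: you introduce virtual lies but then explicitly leave the two decisive sub-claims (i) and (ii) unproved, and (ii) is in fact false as you state it. Whenever $v^*$ is $\tfrac12$-heavy and receives a (necessarily truthful) yes-answer it only becomes heavier, so it remains the queried median and keeps accruing one virtual lie per round; this continues until every other vertex has $\ell_v > \numlies$, a run whose length is governed by the current state of those counters rather than by any absolute constant. Thus $\virtLies{v^*}$ can grow like $\Theta(\numlies + \log n)$, which already destroys the exact constant in the bound --- and such runs need not even be terminal, since an intervening lie can drop $v^*$ below the threshold and restart the pattern. (The preamble macros $\virtLies{\cdot}$, $\virtPhi$, $\virtPsi$ that prompted this idea are actually used in the appendix for a different result, edge queries under prefix-bounded errors, not for this theorem.)

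The paper's amortization is quite different and never charges anything to $v^*$. It isolates maximal \emph{$q$-intervals}: stretches during which some fixed $q$ is $\tfrac12$-heavy and then ceases to be. Inside such an interval the number of yes-answers cannot exceed the number of no-answers, so one can reorder the replies to place each yes immediately before a no. Across such a pair every vertex's lie counter (including $q$'s) rises by at least one, hence the total weight drops by a factor $1/\Gamma < \bigl(\tfrac{\Gamma+1}{2\Gamma}\bigr)^2$, i.e.\ at least two rounds' worth of progress. Leftover no-answers and all rounds outside $q$-intervals fall under Lemmas~\ref{lem:no} and~\ref{lem:yes}. What remains is the trailing suffix in which the target stays heavy forever; there the paper switches to tracking $\mu(\overline{q})$ instead of $\mu$ (each yes-answer shrinks it by $1/\Gamma$, and the stopping condition depends only on the counters in $\overline{q}$). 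This yes--no pairing within $q$-intervals is the idea your sketch is missing.
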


Note that, due to the values of the initial and the final weight, it is enough to argue that the weight decreases on average, i.e., in an amortized way, by a factor of $(\Gamma+1)/(2\Gamma)$ per round.
We first handle two cases (see Lemmas~\ref{lem:no} and~\ref{lem:yes}) when the weight decreases appropriately after a single query.
These cases are a no-answer, and a yes-answer but only when the queried vertex is not $1/2$-heavy.
In the remaining case, i.e., when the queried vertex $q$ is $1/2$ heavy, it is not necessarily true that the weight decreases by the desired factor --- this particularly happens in case of a yes-answer to such a query.
This case is handled by the amortized analysis: we pair such yes-answers with no-answers to the query on $q$ and show that in each such pair the weight decreases appropriately.

\begin{lemma} \label{lem:no}
If Algorithm~\ref{str:graph_vertices_fixed} receives a no-answer in a round $t+1$, then $\mu_{t+1} \leq \frac{\Gamma+1}{2\Gamma} \mu_t$.
\end{lemma}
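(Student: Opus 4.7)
The plan is to reduce the weight-decrease statement to the classical property of a $1$-median that for any neighbor $u$ of the median $q$, the set $N(q,u)$ holds at most half of the total weight. Let $q$ be the queried vertex (a $1$-median by the algorithm) and let $u$ be the response in round $t+1$; since the answer is a no-answer, $u\neq q$ and $u$ is a neighbor of $q$. By the compatibility definition, the compatible vertices are precisely those in $A := N(q,u)$, while all incompatible vertices have their weight divided by $\Gamma$.

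First I would write the update explicitly:
\[\mu_{t+1} \;=\; \mu_t(A) + \tfrac{1}{\Gamma}\bigl(\mu_t - \mu_t(A)\bigr) \;=\; \tfrac{1}{\Gamma}\mu_t + \tfrac{\Gamma-1}{\Gamma}\mu_t(A),\]
and then observe by direct algebra that the target inequality $\mu_{t+1} \le \tfrac{\Gamma+1}{2\Gamma}\mu_t$ is equivalent to $\mu_t(A) \le \tfrac{1}{2}\mu_t$. The whole proof therefore reduces to this single halving bound on the compatible set.

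For the halving bound, the natural approach is to exploit the optimality $\Phi_t(u) \ge \Phi_t(q)$. Writing the difference as $\sum_{x\in V} \mu_t(x)\bigl(d(u,x)-d(q,x)\bigr)$ and splitting on whether $x \in A$: for $x \in A$ the definition of $N(q,u)$ gives $d(u,x) = d(q,x) - \w(\{q,u\})$, contributing exactly $-\w(\{q,u\})\,\mu_t(A)$; for $x \notin A$, the triangle inequality $d(u,x) \le d(u,q) + d(q,x)$ yields $d(u,x) - d(q,x) \le \w(\{q,u\})$, so the corresponding contribution is at most $\w(\{q,u\})\bigl(\mu_t - \mu_t(A)\bigr)$. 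Adding the two and using $\Phi_t(u) - \Phi_t(q) \ge 0$, then cancelling the positive factor $\w(\{q,u\})$, gives $\mu_t(A) \le \tfrac{1}{2}\mu_t$, which closes the argument.

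The only non-mechanical step is recognizing that the $1$-median optimality should be paired with the triangle inequality on the incompatible side; once this is spotted, the argument is essentially forced. A minor sanity check worth noting is that $q \notin A$, which follows from $d(u,q) + \w(\{q,u\}) = 2\w(\{q,u\}) > 0 = d(q,q)$, so the decomposition of $V$ into $A$ and $V \setminus A$ behaves as expected.
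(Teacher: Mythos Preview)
Your proof is correct and follows essentially the same approach as the paper: reduce the inequality to the halving bound $\mu_t(N(q,u)) \le \tfrac{1}{2}\mu_t$ for the $1$-median $q$, then compute the weight update. The only difference is cosmetic---the paper obtains the halving bound by citing Lemma~4 of~\cite{Emamjomeh-Zadeh:2015aa}, whereas you prove it inline via the optimality of $q$ and the triangle inequality, making your argument self-contained.
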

\begin{proof}
Let $q$ be the vertex queried in round $t+1$. Assume that the reply is some neighbor $v$ of $q$.
By \cite{Emamjomeh-Zadeh:2015aa}, Lemma~4, we get that 
$\mu_t(N(q,v))\leq\mu_t/2$.
Moreover, because the lie counter increases by one for all vertices in $\overline{N(q,v)}$ and does not change for all vertices in $N(q,v)$ in round $t+1$, it follows that
$\mu_{t+1} =  \mu_t(N(q,v))+\frac{1}{\Gamma}\mu_t(\overline{N(q,v)}) \leq \frac{\Gamma+1}{2\Gamma} \mu_t.$
\end{proof}

\begin{lemma} \label{lem:yes}
Suppose that Algorithm~\ref{str:graph_vertices_fixed} queries in round $t+1$ a vertex $q$ that is not $1/2$-heavy.
If a yes-answer is received, then $\mu_{t+1} \leq \frac{\Gamma+1}{2\Gamma} \mu_t$.
\end{lemma}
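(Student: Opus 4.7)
The plan is to analyze directly how the weight changes after a yes-answer, and then use the non-heaviness of $q$ to bound $\mu_t(q)$ from above.

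First I would unfold the definition: on a yes-answer, the only vertex compatible with the reply is $q$ itself, so $\ell_u$ is incremented (and $\mu_t(u)$ divided by $\Gamma$) exactly for $u\in\overline{q}$, while $\mu_t(q)$ stays the same. Hence
\[\mu_{t+1} \;=\; \mu_t(q) \;+\; \frac{1}{\Gamma}\,\mu_t(\overline{q}) \;=\; \mu_t(q)\Bigl(1-\tfrac{1}{\Gamma}\Bigr) + \frac{\mu_t}{\Gamma}.\]

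Next I would invoke the hypothesis that $q$ is not $1/2$-heavy, which by definition means $\mu_t(q)\le \mu_t/2$. Since $1-1/\Gamma>0$ (as $\Gamma>1$), the coefficient multiplying $\mu_t(q)$ in the displayed expression is positive, so substituting $\mu_t(q)\le \mu_t/2$ yields
\[\mu_{t+1} \;\le\; \frac{\mu_t}{2}\Bigl(1-\frac{1}{\Gamma}\Bigr) + \frac{\mu_t}{\Gamma} \;=\; \mu_t\cdot\frac{\Gamma-1+2}{2\Gamma} \;=\; \frac{\Gamma+1}{2\Gamma}\,\mu_t,\]
which is exactly the desired bound.

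There is really no obstacle here: the whole content of the lemma is the algebraic identity above combined with the heaviness assumption. What is worth emphasizing, however, is why the non-heaviness hypothesis is needed at all, since it explains the structure of the overall proof of Theorem~\ref{th:vertex_queries_exact}: without it, a yes-answer on a very heavy $q$ preserves almost all of the weight (it leaves $\mu_t(q)$ intact), so one cannot hope for a per-round geometric decrease and must instead resort to the amortized pairing argument sketched in the paragraph preceding Lemma~\ref{lem:no}. Thus Lemma~\ref{lem:yes} is precisely the complementary case to that amortization, handling exactly the situation where direct per-round progress is possible.
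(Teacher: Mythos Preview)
Your proof is correct and follows essentially the same route as the paper's: both observe that a yes-answer leaves $\mu_t(q)$ fixed while dividing $\mu_t(\overline{q})$ by $\Gamma$, then plug in $\mu_t(q)\le \mu_t/2$ from the non-heaviness assumption to obtain the bound. Your version is slightly more explicit about the algebra and the role of the hypothesis, but there is no substantive difference.
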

\begin{proof}
The lie counter increments for each vertex of $G$ except for $q$ and remains the same for $q$ in round $t+1$: $\mu_{t+1}(q)=\mu_t(q)$ and $\mu_{t+1}(\overline{q})=\frac{1}{\Gamma}\mu_t(\overline{q})$.
Since $q$ is not $1/2$-heavy at the beginning of round $t+1$, $\mu_{t}(q)\leq\mu_{t}/2$.
Thus, we get
$\mu_{t+1} = \mu_t(q)+\frac{1}{\Gamma}\mu_t(\overline{q}) \leq  \frac{\Gamma+1}{2\Gamma} \mu_t.$
\end{proof}

Now we turn to the proof of Theorem~\ref{th:vertex_queries_exact}.
Consider a maximal interval $[t_1,t_2]$, where $t_1\leq t_2$ are integers, such that there exists a vertex $q$ that is $1/2$-heavy in each round $t_1,\ldots,t_2$, and $q$ is not $1/2$-heavy in round $t_2+1$.
Call it a \emph{$q$-interval}.
Note that $t_1>0$ and $q$ is not $1/2$-heavy in round $t_1-1$.
We permute the replies given by the Responder in the $q$-interval to obtain a \emph{new} sequence of replies as follows.
The replies in rounds $1,\ldots,t_1-1$ and $t_2+1$ onwards are the same in both sequences.
Note that in the interval $[t_1,t_2]$ the number of yes-answers, denote it by $p$, is smaller than or equal to the number of no-answers.
Reorder the replies in the $q$-interval so that the yes-answers occur in rounds $t_1+2i$ for each $i\in\{0,\ldots,p-1\}$.
In other words, we pair the yes-answers with no-answers so that a yes-answer in round $t_1+2i$ is paired with a no-answer in round $t_1+2i+1$; we call such two rounds a \emph{pair}.
Following the pairs, some remaining, if any, no-answers follow in rounds $t_1+2p,\ldots,t_2$.
Perform this transformation as long as a $q$-interval exists for some $q\in V$.
Denote by $\mu'$ the weight of the new sequence.

Denote by $t'$, if it exists, the minimum integer such that for some vertex $v$ and for each $t>t'$, $v$ is $1/2$-heavy at the end of the round $t$. 
If no such $t'$ exists, then let $t'$ be defined to be the number of rounds of the strategy.

We first analyze what happens, in the new sequence, in rounds $i$ and $i+1$ that are a pair in an arbitrary $q$-interval for some vertex $q$.
After such two rounds the lie counter for $q$ increases by one, and the lie counter of any other vertex increases by at least one.
This in particular implies that $q$ is a 1-median throughout the entire $q$-interval in the new sequence.
Moreover, the two replies in these rounds result in weight decrease by a factor of at least $\Gamma$, $\mu_{i+1}'\leq\mu_{i-1}'/\Gamma$.
Since $\frac{1}{\Gamma}<(\frac{1+\Gamma}{2\Gamma})^2$, the overall progress after the pair is as required.

We now prove that for each $t\in\{0,\ldots,t'-1\}$ that does not belong to any pair it holds
\begin{equation} \label{eq:Phi-prime}
\mu_{t+1}' \leq \frac{\Gamma+1}{2\Gamma} \mu_{t}'.
\end{equation}
Recall that for each $t\leq t'$ that does not belong to any $q$-interval, $\mu_t'(v)=\mu_t(v)$ for each $v\in V$.
If the answer to this query is a no-answer, then~\eqref{eq:Phi-prime} follows from Lemma~\ref{lem:no}.
Lemma~\ref{lem:no} also applies to no-answers of a $q$-interval that do not belong to any pair since, as argued above, $q$ is a $1$-median throughout the $q$-interval.
%Note that this also covers the queries in the reminder of any $q$-interval.
If the answer is a yes-answer, then since the queried vertex $q$ is not $1/2$-heavy due to the choice of $q$-intervals, Inequality~\eqref{eq:Phi-prime} follows from Lemma~\ref{lem:yes}.

If $t'$ is the last round in the original search strategy, then the proof is completed.
Otherwise, consider the suffix of the original sequence of replies, consisting of rounds $t$ for $t>t'$.
In all these rounds, by definition, some vertex $q$ is $1/2$-heavy.
Also by definition, both sequences $\mu$ and $\mu'$ are identical in this suffix.
One can check that if a vertex is heavy at the end of some round, then in the subsequent round Algorithm~\ref{str:graph_vertices_fixed} does query this vertex.
Thus, the vertex $q$ is queried in all rounds of the suffix, and hence $q$ is the target.
Thus, it is enough to observe how the weight decreases on $\overline{q}$ in case of a yes-answer in a round $t>t'$: $\mu_t'(\overline{q})=\mu_{t-1}'(\overline{q})/\Gamma \leq \frac{\Gamma+1}{2\Gamma} \mu_{t-1}'(\overline{q})$.
This completes the proof of Theorem~\ref{th:vertex_queries_exact}.

\subsection{Proof of Theorem~\ref{th:vertex_lin_bounded}}

We turn our attention to the model with a rate of lies bounded by a fraction $r<1/2$ (linearly bounded error model).
Our result, Theorem~\ref{th:vertex_lin_bounded}, is obtained on the basis of Algorithm~\ref{str:graph_vertices_fixed} and the precise bound from Theorem~\ref{th:vertex_queries_exact}.
In particular, we run Algorithm~\ref{str:graph_vertices_fixed} with $\Gamma = \frac{1-r}{r}$ and with a fixed bound on number of lies $\numlies = \frac{\log_2 n}{1 - H(r)}r$.
By Theorem~\ref{th:vertex_queries_exact}, Algorithm~\ref{str:graph_vertices_fixed} asks then at most
$ \frac{ \log_2 n }{\log_2 (2\cdot(1-r))}+ \frac{\log_2 \frac{1-r}{r}}{\log_2 (2\cdot (1-r))} \cdot \numlies = \frac{\log_2 n}{1-H(r)} \cdot \frac{ 1 -H(r) + r \log_2 \frac{1-r}{r}  }{1+\log_2 (1-r)} = \frac{\log_2n}{1-H(r)} = L/r$ queries. This bound concludes the proof, since the number of lies is within $r$ fraction of strategy length.

\subsection{Proof of Theorem~\ref{th:vertex-probabilistic}} \label{sec:noisy}

Let $\varepsilon>0$ be such that $p=\frac{1}{2}(1-\varepsilon)$.
We run the strategy from Theorem~\ref{th:vertex_lin_bounded} with an error rate $r = \frac{1}{2}(1-\varepsilon_0)$, where $ \varepsilon_0 = \varepsilon/ \left(1+\sqrt{2 \ln \delta^{-1}/\ln n}\right)$. 
By Theorem~\ref{th:vertex_lin_bounded} the strategy length is  $Q = \frac{\log_2 n}{1 - H(r)}$ which is (up to lower-order terms) $2 \varepsilon_0^{-2} \ln n$, but we can safely lowerbound it as $\varepsilon_0^{-2} \ln n$.
The expected number of lies is $\mathbb{E}[L] =p \cdot Q$ and by the Hoeffding bound, 
$$\text{Pr}[Q-L \le (1-r) \cdot Q ] \le \exp\left(-\frac{1}{2} (r-p)^2  \frac{\ln n}{\varepsilon_0^2}\right) \le \exp\left(-\frac12\left(\frac{\varepsilon-\varepsilon_0}{\varepsilon_0}\right)^2 \ln n\right) = \delta.$$

\paragraph{Asymptotic properties of entropy function.}
We now proceed to bound $\frac{1-H(p)}{1-H(r)}$. For this we denote $F(x) = 1 - H(\frac12(1-x))$, and denote $\alpha = \frac{1}{1+\sqrt{2 \ln \delta^{-1}/\ln n}}$. So our goal is in fact to bound  $\frac{F(\varepsilon)}{F(\alpha \cdot \varepsilon)}$.

\begin{lemma}
For any $-1 \le x \le 1$ and $\alpha<1$ there is
$$\frac{F(x)}{F(\alpha x)} \le \frac{1}{F(\alpha)}.$$
\end{lemma}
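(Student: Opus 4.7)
The plan is to rewrite the claim as $F(\alpha x) \ge F(\alpha) F(x)$ and, since $F$ is even (because $H$ is symmetric about $\tfrac12$), reduce to $x, \alpha \in [0, 1)$; the boundary cases $\alpha = 0$ and $x = 0$ are trivial. First I would record the closed forms: from $F(y) = 1 - H(\tfrac12(1-y))$ one obtains
\[
F(y) = \tfrac{1-y}{2}\log_2(1-y) + \tfrac{1+y}{2}\log_2(1+y), \quad F'(y) = \tfrac12 \log_2 \tfrac{1+y}{1-y}, \quad F''(y) = \tfrac{1}{\ln 2 \cdot (1 - y^2)},
\]
together with $F(0) = F'(0) = 0$ and $F(1) = 1$. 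Setting $\psi(x) := F(\alpha x) - F(\alpha) F(x)$, the three identities $\psi(0) = 0$, $\psi(1) = F(\alpha) - F(\alpha) = 0$, and $\psi'(0) = 0$ fall out immediately.

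The key step is to exploit the transparent sign of $\psi''(x) = \alpha^2 F''(\alpha x) - F(\alpha) F''(x)$: substituting the closed form of $F''$ yields
\[
\psi''(x) \ge 0 \;\iff\; \alpha^2 (1 - x^2) \ge F(\alpha)(1 - \alpha^2 x^2) \;\iff\; x^2 \le c_\alpha := \tfrac{\alpha^2 - F(\alpha)}{\alpha^2(1 - F(\alpha))}.
\]
So $\psi$ has a convex-then-concave shape: it is convex on the initial segment $[0, \sqrt{\max(c_\alpha,0)}]$ (possibly degenerate to the single point $\{0\}$) and concave on the complementary tail up to $1$.

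From here $\psi \ge 0$ on $[0,1]$ follows by combining the two pieces. On the convex segment, $\psi(0) = \psi'(0) = 0$ and $\psi'' \ge 0$ force $\psi'$ nonnegative, hence $\psi$ is nondecreasing and stays $\ge 0$; in particular the left endpoint of the concave tail carries a nonnegative value. On the concave tail, $\psi$ lies above the chord joining that nonnegative left value to $\psi(1) = 0$, and this chord is itself nonnegative throughout. Therefore $\psi \ge 0$ on $[0,1]$, which is the claim.

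The main obstacle is conceptual rather than computational: the move that makes everything work is recognizing that the closed form $F''(y) = 1/(\ln 2 \cdot (1-y^2))$ collapses the shape of $\psi$ to a single quadratic sign condition in $x$, after which the three coincident boundary/tangency identities $\psi(0) = \psi(1) = \psi'(0) = 0$ alone force nonnegativity, with no separate estimate on the size of $F(\alpha)$ (such as a Pinsker-type bound) being required.
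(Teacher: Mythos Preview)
Your proof is correct, but the route is genuinely different from the paper's. The paper makes the substitution $G(t)=\ln F(e^{t})$, checks (by calculus) that $G''\ge 0$, and then observes that the desired inequality is exactly
\[
G(\ln x)-G(\ln\alpha+\ln x)\ \le\ G(0)-G(\ln\alpha),
\]
which is the standard ``equal-width differences are monotone in position'' consequence of convexity. So after the change of variables the whole lemma collapses to a one-line convexity argument.

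Your approach avoids the logarithmic substitution entirely: you work directly with $\psi(x)=F(\alpha x)-F(\alpha)F(x)$, use the explicit $F''(y)=\big(\ln 2\,(1-y^{2})\big)^{-1}$ to reduce the sign of $\psi''$ to a single quadratic threshold in $x$, and then let the three boundary facts $\psi(0)=\psi'(0)=\psi(1)=0$ force $\psi\ge 0$ on the convex-then-concave shape. This is a bit longer but completely self-contained: the computation of $F''$ is the only calculus input, whereas in the paper the verification that $G''\ge 0$ is asserted but not written out (and is itself a nontrivial derivative chase). Both arguments ultimately exploit the same analytic feature of $F$; the paper packages it as log-convexity in $\ln x$, you package it as a single inflection of $\psi$.
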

\begin{proof}
Consider $G(x) = \ln F(\exp(x))$. It can be verified with calculus that $G''(x) \ge 0$. The claim is equivalent to
$$G(\ln x) - G(\ln \alpha + \ln x) \le G(0) - G(\ln \alpha )$$
which follows from the convexity of $G(x)$.
\end{proof}

First assume $\alpha \ge 1/2$, so $\ln n \ge \ln \delta^{-1}$. Denote $\eta = 1 - \alpha$. We observe that $\eta= \bigo(\sqrt{\frac{\ln \delta^{-1}}{\ln n}})$. We take Taylor expansion of $1/F(x)$ around $x=1$, and we have that 
$\frac{1}{F(\alpha)} = 1 + \bigo(\eta \ln \eta^{-1}).$ In this case the bound is
$$Q \le \frac{\log_2 n}{1-H(p)} \cdot \frac{1}{F(\alpha)} =  \frac{\log_2 n + \bigo(\sqrt{\ln \delta^{-1} \ln n}  \cdot (1+\ln \frac{\ln n}{\ln \delta^{-1}}))}{1-H(p)}.$$

In the second case when $\alpha \le 1/2$, from Taylor expansion around $x=0$ there is  $\frac{1}{F(\alpha)} = \Theta(\alpha^{-2})$. So in this case the bound is
$$Q \le \frac{\log_2 n}{1-H(p)} \cdot \frac{1}{F(\alpha)} = \frac{\bigo(\ln \delta^{-1})}{1-H(p)}.$$\qed
%\end{proof}

\section{Conclusions}
We note that also other query models have been studied in the graph-theoretic context, including edge queries.
In an \emph{edge query}, the Questioner points to an edge and the Responder tells which endpoint of that edge is closer to the target, breaking ties arbitrarily.
It turns out that edge queries are more challenging to analyze, i.e., our technique for vertex queries does not transfer without changes.
This is mostly due to a possible lack of edges that subdivide the search space equally enough.
This issue can be patched by treating heavy vertices in a separate way.
We provide a strategy of query complexity $\bigo(\frac{1}{\varepsilon^2} \Delta \log \Delta (\log n + \log \delta^{-1}))$.
This  generalizes the noisy binary search of \cite{FeigeRPU94} to general graphs, and has the advantage of being a weight-based strategy.

We additionally show the generalizations of our strategies to searching in unbounded domains, where one is concerned in searching e.g., the space of all positive integers with comparison queries.
The goal is to minimize the number of queries as a function of $N$, the (unknown) position of the target. By adjusting the initial distribution of the weight to decrease polynomially with respect to the distance from the point 0, we almost automatically get desired solutions for adversarial models. For probabilistic error model we present (a slightly more involved) strategy of expected query complexity $\bigo(\frac{1}{\varepsilon^2} (\log N + \log \delta^{-1}))$, improving over the complexity $\bigo(\text{poly}(\varepsilon^{-1})\log N \log{\delta^{-1}})$ in \cite{aslam1995noise}.

\bibliographystyle{alpha}
\bibliography{bib}

\newcommand{\etalchar}[1]{$^{#1}$}
\begin{thebibliography}{RMK{\etalchar{+}}80}

\bibitem[ABV17]{AwasthiBV17}
Pranjal Awasthi, Maria{-}Florina Balcan, and Konstantin Voevodski.
\newblock Local algorithms for interactive clustering.
\newblock {\em Journal of Machine Learning Research}, 18:3:1--3:35, 2017.

\bibitem[AD91]{AslamD91}
Javed~A. Aslam and Aditi Dhagat.
\newblock Searching in the presence of linearly bounded errors (extended
  abstract).
\newblock In {\em STOC}, pages 486--493, 1991.

\bibitem[Aig96]{Aigner96}
Martin Aigner.
\newblock Searching with lies.
\newblock {\em J. Comb. Theory, Ser. {A}}, 74(1):43--56, 1996.

\bibitem[Ang87]{Angluin87}
Dana Angluin.
\newblock Queries and concept learning.
\newblock {\em Machine Learning}, 2(4):319--342, 1987.

\bibitem[Asl95]{aslam1995noise}
Javed~A Aslam.
\newblock {\em Noise tolerant algorithms for learning and searching}.
\newblock PhD thesis, Massachusetts Institute of Technology, 1995.

\bibitem[BB08]{BalcanB08}
Maria{-}Florina Balcan and Avrim Blum.
\newblock Clustering with interactive feedback.
\newblock In {\em {ALT}}, pages 316--328, 2008.

\bibitem[Ber68]{Berlekamp68}
Elvyn~R. Berlekamp.
\newblock Block coding for the binary symmetric channel with noiseless,
  delayless feedback.
\newblock In {\em {H.B. Mann (ed.)}, Error-Correcting Codes}, pages 61--88.
  Wiley \& Sons, New York, 1968.

\bibitem[BH08]{Ben-OrH08}
Michael Ben{-}Or and Avinatan Hassidim.
\newblock The bayesian learner is optimal for noisy binary search (and pretty
  good for quantum as well).
\newblock In {\em FOCS}, pages 221--230, 2008.

\bibitem[BK93]{BorgstromK93}
Ryan~S. Borgstrom and S.~Rao Kosaraju.
\newblock Comparison-based search in the presence of errors.
\newblock In {\em STOC}, pages 130--136, 1993.

\bibitem[BKR18]{BoczkowskiKR16}
Lucas Boczkowski, Amos Korman, and Yoav Rodeh.
\newblock Searching a tree with permanently noisy advice.
\newblock In {\em {ESA}}, pages 54:1--54:13, 2018.

\bibitem[Cic13]{Cicalese:2013:FSA:2568443}
Ferdinando Cicalese.
\newblock {\em Fault-Tolerant Search Algorithms: Reliable Computation with
  Unreliable Information}.
\newblock Springer Publishing Company, Incorporated, 2013.

\bibitem[CJLV12]{CicaleseJLV12}
Ferdinando Cicalese, Tobias Jacobs, Eduardo~Sany Laber, and Caio~Dias Valentim.
\newblock The binary identification problem for weighted trees.
\newblock {\em Theor. Comput. Sci.}, 459:100--112, 2012.

\bibitem[CKL{\etalchar{+}}16]{CicaleseKLPV16}
Ferdinando Cicalese, Bal{\'{a}}zs Keszegh, Bernard Lidick{\'{y}},
  D{\"{o}}m{\"{o}}t{\"{o}}r P{\'{a}}lv{\"{o}}lgyi, and Tom{\'{a}}s Valla.
\newblock On the tree search problem with non-uniform costs.
\newblock {\em Theor. Comput. Sci.}, 647:22--32, 2016.

\bibitem[Dep07]{Deppe2007}
Christian Deppe.
\newblock Coding with feedback and searching with lies.
\newblock In Imre Csisz{\'a}r, Gyula O.~H. Katona, G{\'a}bor Tardos, and
  G{\'a}bor Wiener, editors, {\em Entropy, Search, Complexity}, pages 27--70.
  Springer Berlin Heidelberg, Berlin, Heidelberg, 2007.

\bibitem[Der06]{Dereniowski06}
Dariusz Dereniowski.
\newblock Edge ranking of weighted trees.
\newblock {\em Discrete Applied Mathematics}, 154(8):1198--1209, 2006.

\bibitem[Der08]{Dereniowski08}
Dariusz Dereniowski.
\newblock Edge ranking and searching in partial orders.
\newblock {\em Discrete Applied Mathematics}, 156(13):2493--2500, 2008.

\bibitem[DGW92]{DhagatGW92}
Aditi Dhagat, P{\'{e}}ter G{\'{a}}cs, and Peter Winkler.
\newblock On playing "twenty questions" with a liar.
\newblock In {\em SODA}, pages 16--22, 1992.

\bibitem[DKUZ17]{DereniowskiKUZ17}
Dariusz Dereniowski, Adrian Kosowski, Przemyslaw Uznański, and Mengchuan Zou.
\newblock Approximation strategies for generalized binary search in weighted
  trees.
\newblock In {\em ICALP}, pages 84:1--84:14, 2017.

\bibitem[DMS17]{Deligkas:2017aa}
Argyrios Deligkas, George~B. Mertzios, and Paul~G. Spirakis.
\newblock Binary search in graphs revisited.
\newblock In {\em MFCS}, pages 20:1--20:14, 2017.

\bibitem[DN06]{DereniowskiN06}
Dariusz Dereniowski and Adam Nadolski.
\newblock Vertex rankings of chordal graphs and weighted trees.
\newblock {\em Inf. Process. Lett.}, 98(3):96--100, 2006.

\bibitem[EK17]{Emamjomeh-ZadehK17}
Ehsan Emamjomeh{-}Zadeh and David Kempe.
\newblock A general framework for robust interactive learning.
\newblock In {\em {NIPS}}, pages 7085--7094, 2017.

\bibitem[EKS16]{Emamjomeh-Zadeh:2015aa}
Ehsan Emamjomeh{-}Zadeh, David Kempe, and Vikrant Singhal.
\newblock Deterministic and probabilistic binary search in graphs.
\newblock In {\em STOC}, pages 519--532, 2016.

\bibitem[FRPU94]{FeigeRPU94}
Uriel Feige, Prabhakar Raghavan, David Peleg, and Eli Upfal.
\newblock Computing with noisy information.
\newblock {\em {SIAM} J. Comput.}, 23(5):1001--1018, 1994.

\bibitem[HIKN10]{HanusseIKN10}
Nicolas Hanusse, David Ilcinkas, Adrian Kosowski, and Nicolas Nisse.
\newblock Locating a target with an agent guided by unreliable local advice:
  how to beat the random walk when you have a clock?
\newblock In {\em PODC}, pages 355--364, 2010.

\bibitem[HKK04]{HanusseKK04}
Nicolas Hanusse, Evangelos Kranakis, and Danny Krizanc.
\newblock Searching with mobile agents in networks with liars.
\newblock {\em Discrete Applied Mathematics}, 137(1):69--85, 2004.

\bibitem[HKKK08]{HanusseKKK08}
Nicolas Hanusse, Dimitris~J. Kavvadias, Evangelos Kranakis, and Danny Krizanc.
\newblock Memoryless search algorithms in a network with faulty advice.
\newblock {\em Theor. Comput. Sci.}, 402(2-3):190--198, 2008.

\bibitem[KK99]{KranakisK99}
Evangelos Kranakis and Danny Krizanc.
\newblock Searching with uncertainty.
\newblock In {\em {SIROCCO}}, pages 194--203, 1999.

\bibitem[KK07]{KarpK07}
Richard~M. Karp and Robert Kleinberg.
\newblock Noisy binary search and its applications.
\newblock In {\em SODA}, pages 881--890, 2007.

\bibitem[KV94]{KearnsV94}
Michael~J. Kearns and Umesh~V. Vazirani.
\newblock {\em An Introduction to Computational Learning Theory}.
\newblock {MIT} Press, 1994.

\bibitem[Lit87]{Littlestone87}
Nick Littlestone.
\newblock Learning quickly when irrelevant attributes abound: {A} new
  linear-threshold algorithm.
\newblock {\em Machine Learning}, 2(4):285--318, 1987.

\bibitem[Liu11]{Liu11}
Tie{-}Yan Liu.
\newblock {\em Learning to Rank for Information Retrieval}.
\newblock Springer, 2011.

\bibitem[LMP01]{LaberMP01}
Eduardo~Sany Laber, Ruy~Luiz Milidi{\'{u}}, and Artur~Alves Pessoa.
\newblock On binary searching with non-uniform costs.
\newblock In {\em SODA}, pages 855--864, 2001.

\bibitem[Lon92]{Long92}
Philip~M. Long.
\newblock Sorting and searching with a faulty comparison oracle.
\newblock Technical report, Technical Report UCSC-CRL-92-15, University of
  California at Santa Cruz, 1992.

\bibitem[LY01]{LamY01}
Tak~Wah Lam and Fung~Ling Yue.
\newblock Optimal edge ranking of trees in linear time.
\newblock {\em Algorithmica}, 30(1):12--33, 2001.

\bibitem[MOW08]{MozesOW08}
Shay Mozes, Krzysztof Onak, and Oren Weimann.
\newblock Finding an optimal tree searching strategy in linear time.
\newblock In {\em SODA}, pages 1096--1105, 2008.

\bibitem[Mut94]{Muthukrishnan94}
S.~Muthukrishnan.
\newblock On optimal strategies for searching in presence of errors.
\newblock In {\em SODA}, pages 680--689, 1994.

\bibitem[NdM06]{NesetrilM06}
Jaroslav Nesetril and Patrice~Ossona de~Mendez.
\newblock Tree-depth, subgraph coloring and homomorphism bounds.
\newblock {\em Eur. J. Comb.}, 27(6):1022--1041, 2006.

\bibitem[OP06]{OnakP06}
Krzysztof Onak and Pawel Parys.
\newblock Generalization of binary search: Searching in trees and forest-like
  partial orders.
\newblock In {\em FOCS}, pages 379--388, 2006.

\bibitem[Pel89]{Pelc89}
Andrzej Pelc.
\newblock Searching with known error probability.
\newblock {\em Theor. Comput. Sci.}, 63(2):185--202, 1989.

\bibitem[Pel02]{PELC200271}
Andrzej Pelc.
\newblock Searching games with errors---fifty years of coping with liars.
\newblock {\em Theoretical Computer Science}, 270(1):71 -- 109, 2002.

\bibitem[RJ05]{RadlinskiJ05}
Filip Radlinski and Thorsten Joachims.
\newblock Query chains: learning to rank from implicit feedback.
\newblock In {\em {SIGKDD}}, pages 239--248, 2005.

\bibitem[RMK{\etalchar{+}}80]{rivest1980coping}
Ronald~L. Rivest, Albert~R. Meyer, Daniel~J. Kleitman, Karl Winklmann, and Joel
  Spencer.
\newblock Coping with errors in binary search procedures.
\newblock {\em Journal of Computer and System Sciences}, 20(3):396--404, 1980.

\bibitem[Ré61]{Renyi61}
Alfréd Rényi.
\newblock On a problem of information theory.
\newblock {\em MTA Mat. Kut. Int. Kozl.}, 6B:505--516, 1961.

\bibitem[Sch89]{Schaffer89}
Alejandro~A. Sch{\"{a}}ffer.
\newblock Optimal node ranking of trees in linear time.
\newblock {\em Inf. Process. Lett.}, 33(2):91--96, 1989.

\bibitem[Set12]{2012Settles}
Burr Settles.
\newblock {\em Active Learning}.
\newblock Synthesis Lectures on Artificial Intelligence and Machine Learning.
  Morgan {\&} Claypool Publishers, 2012.

\bibitem[Ula76]{Ulam76}
Stanislaw~M. Ulam.
\newblock {\em Adventures of a Mathematician}.
\newblock Scribner, New York, 1976.

\end{thebibliography}

\appendix
\section{More Searching Models}

We recall a different format of queries called \emph{edge queries}, where in each round the Questioner selects an edge $\{u,v\}$ of an input graph and the Responder replies with the endpoint of $\{u,v\}$ that is closer to the target.
Again, ties are broken arbitrarily.
The edge-query model naturally generalizes comparison queries in linearly or partially ordered data.
In case of edge-queries we consider unweighted graphs.

\medskip
For the limitations imposed on the Responder, we distinguish yet another model called \emph{prefix-bounded}.
In this model, in \emph{each} prefix of $i$ queries there may be at most $ri$ lies, $0\leq r<\frac{1}{2}$, and in such case as opposed to the linearly bounded model, the length of the strategy does not need to be initially set.
It is well known that these error models are not feasible for $r\geq\frac{1}{2}$, even in the case of paths.
They bridge the gap between the adversarial one with a fixed number of lies and the probabilistic model.
We note that these models naturally reflect processes in potential applications like communication scenarios over a noisy channel or hardware errors.
This is due to the fact that in such scenarios the errors typically accumulate over time.

In the following sections we show that our generic ideas can be applied to several other models.
Our results either match or improve the existing ones, which we point out throughout.
We remark that in both cases, i.e., whether we obtain an improvement or arrive at an existing result, we reach that point with a simpler analysis.

\section{Analysis of the Generic Strategies for Edge Queries}
We start by giving the notation regarding edge queries.
The \emph{degree} of a vertex $v$, denoted by $\deg(v)$, is the number of its neighbors in $G$.
We denote by $\Delta=\max_{v\in V}\deg(v)$ the maximum degree of $G$.
We define an edge-vertex distance $d(e,v) = \min(d(x,v),d(y,v))$ for an edge $e=\{x,y\}$.
Similarly as for vertex queries, based on a weight function $\mu$ and distance $d$, we define a potential of an edge $e$:
\begin{equation*}
\Phi(e) = \sum_{u \in V} \mu(u) \cdot d(e,u).
\end{equation*}
Again, we write $\Phi_t(e)$ to refer to this value at the end of round $t$.
Any edge $e$ minimizing $\Phi(e)$ is called \emph{$1$-edge-median}.
For an edge $e = \{u,v\}$ and one of its endpoints,
\begin{align*}
&N(e,v) = \{w \mid d(v,w) \leq d(u,w)\}, &N_{<}(e,v) = \{w \mid d(v,w) < d(u,w)\}.
\end{align*}

For edge-queries we give a strategy that is a bit more complicated --- see Algorithm~\ref{str:graph_edges_fixed}.
Intuitively, as opposed to the vertex-query case, there may be no edges in the graph that `subdivide' the search space evenly enough.
This already happens as soon as one of the vertices is $\frac{1}{\Delta+1}$-heavy.
If this is the case, and say vertex $v$ is $\frac{1}{\Delta+1}$-heavy, we cyclically query edges incident to $v$ in an appropriate greedy order.
We continue to do so until all other vertices have been eliminated, and hence $v$ must be the target, or $v$ is no longer $\frac{1}{\Delta+1}$-heavy.
If none of the vertices is $\frac{1}{\Delta+1}$-heavy, we simply query a 1-edge-median.
The absence of such heavy vertices essentially ensures, that this decreases the weight sufficiently.

\begin{figure}[t]
\begin{center}
\begin{minipage}{.9\linewidth}
\begin{algorithm}[H]
\SetAlgoRefName{EDGE}
	\caption{Edge queries for fixed number of $\numlies$ lies.}
	\label{str:graph_edges_fixed}
	\For{ $v \in V$ }
	{
		$\mu(v) \gets 1$\;
		$\ell_v \gets 0$
	}
	\While{more than one vertex $x$ satisfies $\ell_x \leq \numlies$}
	{
		\If(\Comment*[f]{$v$ is $\frac{1}{\Delta+1}$-heavy}){there exists $v$ such that $\mu(v)> \mu/(\Delta+1)$}
		{
			\For(\Comment*[f]{greedy ordering of neighbors}){ $i=1$ \KwTo $\deg(v)$}
			{
				select an edge $e_i$ incident to $v$ to maximize $\mu(\bigcup_{j \le i} N_{<}(e_j,v) )$
			}
			$i \gets 1$

			\Do(\Comment*[f]{cyclically query edges incident to $v$}){$\mu(v) > \mu/(\Delta+1)$ and there exists more than one $x$ with $\ell_x \leq \numlies$}
			{
				query $e_i$\;
				\For{all nodes $u$ not compatible with the answer}
				{
					$\ell_u \gets \ell_u + 1$\;
					$\mu(u) \gets \mu(u)/\Gamma$
				}
				\If{the answer to the last query is $v$}
				{
					$i \gets (i+1)\bmod\deg(v)$\;
				}
			}
		
		}
		\Else
		{
			$e \gets \arg \min_{x \in E} \Phi(x)$\;
			query $e$\;
			\For{all nodes $u$ not compatible with the answer}
			{
				$\ell_u \gets \ell_u + 1$\;
				$\mu(u) \gets \mu(u)/\Gamma$
			}
		}
	}
	\Return $v$ such that $\ell_v \leq \numlies$
\end{algorithm}
\end{minipage}
\end{center}
\end{figure}

This results in a more involved proof given in Section~\ref{sec:proofs}.
Similarly as for vertex queries, we also first provide an analysis for a fixed number of lies (see Theorem~\ref{th:edges_queries_exact}) and then from this bound we derive appropriate bounds for other models (Theorems~\ref{th:edge_bounded} and~\ref{th:edge-probabilistic}).
\begin{theorem}
\label{th:edges_queries_exact}
Let $\Gamma>1$.
Algorithm~\ref{str:graph_edges_fixed} finds the target in at most $\frac{\log n + \numlies \log \Gamma }{\log(1+ \frac{\Gamma-1}{\Gamma\Delta+1})}$ edge queries.
\end{theorem}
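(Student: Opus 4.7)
The plan is to mirror the amortized weight-decay argument used for Theorem~\ref{th:vertex_queries_exact}. Set $\beta = 1 + \frac{\Gamma-1}{\Gamma\Delta+1} = \frac{\Gamma(\Delta+1)}{\Gamma\Delta+1}$; it suffices to show that $\mu_t$ decreases by a factor of at least $\beta$ per query, amortized over the entire run of Algorithm~\ref{str:graph_edges_fixed}. Since $\mu_0 = n$ and the lone surviving vertex at termination has $\ell_v \le \numlies$ (so $\mu_T \ge \Gamma^{-\numlies}$), chaining yields $\Gamma^{-\numlies} \le n\beta^{-T}$, i.e., $T \le (\log n + \numlies\log\Gamma)/\log\beta$, exactly as claimed. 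The algorithm has two branches that require separate treatments.

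For the non-heavy branch, where no vertex is $\frac{1}{\Delta+1}$-heavy and a $1$-edge-median $e=\{u,w\}$ is queried, the key step is an edge analog of Lemma~4 of~\cite{Emamjomeh-Zadeh:2015aa}: that $\mu(N_<(e,u)) \ge \mu/(\Delta+1)$, and symmetrically for $w$. I would prove it by contradiction from the optimality of $e$: supposing $\mu(N_<(e,u)) < \mu/(\Delta+1)$, comparing $\Phi(e)$ with $\Phi(e')$ for the at most $\Delta$ edges $e'$ incident to $w$ and averaging out the differences $d(e,v)-d(e',v)$ produces some $e'$ with $\Phi(e') < \Phi(e)$, a contradiction; the $(\Delta+1)^{-1}$ threshold emerges naturally from that averaging. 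Given the lemma, the update gives $\mu_{t+1} = \mu_t(N(e,v)) + \Gamma^{-1}\mu_t(N_<(e,v')) \le \mu_t\bigl(1 - (1-\Gamma^{-1})/(\Delta+1)\bigr) = \mu_t/\beta$, where $v$ is the answered endpoint and $v'$ the other.

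For the heavy branch, where some $v$ with $\mu(v) > \mu/(\Delta+1)$ triggers a cyclic sweep of greedily-ordered edges $e_1,\ldots,e_{\deg(v)}$ incident to $v$, the per-query drop is not uniform: a "no" answer increments $\ell_v$ and hence cuts $\mu(v)$ by a factor of $\Gamma$, a one-shot decrease that easily absorbs many rounds' worth of $\beta$; by contrast, a single "yes" answer to $e_i$ only shrinks $\mu(N_<(e_i,u_i))$ by $\Gamma$, which may be tiny. I would therefore amortize over a full block of queries comprising one activation of the heavy phase. The cleanest case is a complete cycle of $\deg(v)$ consecutive yes-answers: since every $w\neq v$ lies in $N_<(e_j,u_j)$ for some $j$ (the first edge of a shortest $v$-to-$w$ path witnesses this), $\mu(V\setminus\{v\})$ shrinks by a factor of at least $\Gamma$ across the cycle, and combining with $\mu(v) \ge \mu/(\Delta+1)$ gives a total ratio of at most $\frac{\Gamma+\Delta}{\Gamma(\Delta+1)}$, which Bernoulli's inequality bounds by $\beta^{-\deg(v)}$. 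For partial cycles, the greedy ordering on $\mu(\bigcup_{j\le i}N_<(e_j,v))$ is designed to yield an analogous set-cover-style prefix guarantee, which together with the heavy-branch exit condition preserves the $\beta$-per-query rate.

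The main obstacle is precisely this amortized accounting in the heavy branch: individual yes-answers can decrease the total weight by arbitrarily little, so the bookkeeping must batch consecutive queries, handle partial cycles terminated by a no-answer (which may simultaneously demote $v$ and hand control back to the non-heavy branch), and avoid double-counting between overlapping $N_<$ sets. Establishing that the greedy ordering gives a strong-enough prefix guarantee to enforce the $\beta$-per-query rate on every prefix of a cycle, cleanly synchronized with the \textbf{do-while} exit condition, is the technical crux. Once that is in place, the per-round decays from the two branches chain into a global bound $\mu_T \le n\beta^{-T}$, and taking logarithms delivers Theorem~\ref{th:edges_queries_exact}.
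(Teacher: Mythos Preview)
Your treatment of the non-heavy branch is fine and matches the paper: the edge analog you sketch is exactly Lemma~\ref{lem:degBound}, and the per-query bound you derive from it reproduces Lemma~\ref{lem:edge-no-answer}.

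The heavy branch, however, contains a concrete error. You claim that after a full cycle of $\deg(v)$ consecutive yes-answers, ``combining with $\mu(v) \ge \mu/(\Delta+1)$ gives a total ratio of at most $\frac{\Gamma+\Delta}{\Gamma(\Delta+1)}$''. But a full yes-cycle leaves $\mu(v)$ unchanged and shrinks $\mu(\overline{v})$ by at least $\Gamma$, so the ratio of new to old total weight equals $\frac{1}{\Gamma} + \bigl(1 - \frac{1}{\Gamma}\bigr)\,\mu(v)/\mu$, which is \emph{increasing} in $\mu(v)/\mu$. Heaviness supplies only a \emph{lower} bound on $\mu(v)/\mu$, hence a lower bound on the ratio: it is strictly \emph{greater} than $\frac{\Gamma+\Delta}{\Gamma(\Delta+1)}$, not at most. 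When $\mu(v)/\mu$ is close to $1$ --- and nothing in the heavy phase forbids this --- the drop across a full yes-cycle is negligible, and no amount of yes-cycling alone delivers the $\beta$-per-query rate. Your partial-cycle/set-cover refinement inherits the same defect, since it too leaves $\mu(v)$ untouched.

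What is missing is the pairing of yes-segments with no-answers that the paper carries out. The paper partitions the yes-answers of a maximal heavy phase into segments of length $t \le \deg(v)$ and pairs each segment with one no-answer. The greedy ordering guarantees the first $t$ yes-answers cover at least a $t/\deg(v)$ fraction of $\mu_{t'}(\overline{v})$, while the paired no-answer cuts $\mu(v)$ by $\Gamma$; together these $t+1$ rounds give $\mu_{t'+t} \le \bigl(1 - \frac{\Gamma-1}{\Gamma}\cdot\frac{t+1}{\Delta+1}\bigr)\mu_{t'} \le \beta^{-(t+1)}\mu_{t'}$ via Bernoulli. That enough no-answers exist for the pairing follows from the exit condition: if $v$ leaves the phase no longer heavy, then $\ell_v$ must have risen at least $\lceil b'/\deg(v)\rceil$ times, where $b'$ is the number of yes-answers. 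Finally, a trailing all-yes suffix in which $v$ stays heavy until termination (so $v$ is necessarily the target) needs separate handling --- the paper appends ``pseudo-queries'' that raise $\ell_v$ to $\numlies$ --- and your outline does not address this case either.
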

\begin{theorem}
\label{th:edge_bounded}
In the linearly bounded error model with error rate $r = \frac{1}{\Delta+1}(1 - \varepsilon)$ for some $0<\varepsilon\leq 1$, the target can be found in at most $2 \varepsilon^{-2} \Delta \ln n $ edge queries.
\end{theorem}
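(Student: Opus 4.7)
My plan mirrors the proof of Theorem~\ref{th:vertex_lin_bounded}: instantiate Theorem~\ref{th:edges_queries_exact} with a carefully chosen $\Gamma$ and an a-priori lie budget $\numlies$, and close the loop with the linearly bounded hypothesis $\numlies \le rT$.

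I would take
\[\Gamma \;=\; \frac{1-r}{r\Delta} \;=\; \frac{\Delta+\varepsilon}{\Delta(1-\varepsilon)} \qquad\text{and}\qquad \numlies \;=\; \lfloor r\cdot 2\varepsilon^{-2}\Delta\ln n\rfloor,\]
which exceeds $1$ precisely because $r < 1/(\Delta+1)$. This value of $\Gamma$ is natural: it zeros the derivative with respect to $\Gamma$ of $\ln(1+(\Gamma-1)/(\Gamma\Delta+1)) - r\ln\Gamma$. Two identities then do most of the work. First, the choice forces $\Gamma\Delta+1 = 1/r$, so
\[\frac{\Gamma-1}{\Gamma\Delta+1} \;=\; (\Gamma-1)r \;=\; \frac{1-r(\Delta+1)}{\Delta} \;=\; \frac{\varepsilon}{\Delta}.\]
Second, the alternative form of $\Gamma$ gives $\ln\Gamma = \ln(1+\varepsilon/\Delta) - \ln(1-\varepsilon)$. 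Reading Theorem~\ref{th:edges_queries_exact} with natural logarithms and feeding in $\numlies \le rT$, I rearrange to
\[T \;\le\; \frac{\ln n}{\ln(1+\varepsilon/\Delta) - r\ln\Gamma} \;=\; \frac{\ln n}{(1-r)\ln(1+\varepsilon/\Delta) + r\ln(1-\varepsilon)}.\]

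The proof then reduces to showing the denominator is at least $\varepsilon^2/(2\Delta)$; this simultaneously yields $T \le 2\varepsilon^{-2}\Delta\ln n$ and the self-consistency $rT \le \numlies$ needed to apply Theorem~\ref{th:edges_queries_exact}. Clearing the factor $1/(\Delta+1)$ inherited from $r$ and $1-r$, the target inequality becomes
\[F(\varepsilon) \defeq (\Delta+\varepsilon)\ln\!\left(1+\tfrac{\varepsilon}{\Delta}\right) + (1-\varepsilon)\ln(1-\varepsilon) \;\ge\; \frac{(\Delta+1)\varepsilon^2}{2\Delta} \qquad\text{for } \varepsilon \in [0,1].\]

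I expect this final inequality to be the only substantive step, though it is routine. I would dispatch it via two differentiations: a direct calculation gives $F(0) = F'(0) = 0$ and $F''(\varepsilon) = \frac{1}{\Delta+\varepsilon}+\frac{1}{1-\varepsilon}$. Then $F'''(\varepsilon) = \frac{1}{(1-\varepsilon)^2} - \frac{1}{(\Delta+\varepsilon)^2} \ge 0$ on $[0,1]$ for every $\Delta \ge 1$ (since $1-\varepsilon \le \Delta+\varepsilon$ there), so $F''$ is non-decreasing and $F''(\varepsilon) \ge F''(0) = (\Delta+1)/\Delta$. Integrating this twice from $0$ to $\varepsilon$ gives $F(\varepsilon) \ge (\Delta+1)\varepsilon^2/(2\Delta)$, completing the proof. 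Everything else is algebraic bookkeeping in direct parallel to the vertex-query case.
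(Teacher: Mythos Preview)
Your proof is correct and follows essentially the same route as the paper: you choose the identical value $\Gamma = \frac{1-r}{r\Delta}$, reduce via Theorem~\ref{th:edges_queries_exact} to the same fixed-point inequality $\ln(1+\varepsilon/\Delta) - r\ln\Gamma \ge \varepsilon^2/(2\Delta)$, and then verify this denominator bound. The only difference is cosmetic: the paper rewrites the denominator as $\frac{1}{\Delta+1}F(\varepsilon) + \frac{\Delta}{\Delta+1}F(-\varepsilon/\Delta)$ with $F(x)=x+(1-x)\ln(1-x)=\sum_{i\ge 2}\frac{x^i}{i(i-1)}$ and drops all terms past $i=2$ (noting they are nonnegative), whereas you establish the same inequality via $F'''\ge 0$ and two integrations --- these are two packagings of the same Taylor estimate.
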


\begin{theorem} \label{th:edge-probabilistic}
In the probabilistic error model with error rate  $p=\frac{1}{2}(1-\varepsilon)$ for some $0<\varepsilon\leq 1$ there exists a strategy that finds the target  using at most $\bigo(\varepsilon^{-2}  \Delta \log \Delta \cdot (\log n + \log \delta^{-1}) )$ edge queries, correctly with probability at least $1 - \delta$.
\end{theorem}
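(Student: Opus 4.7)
The plan is to reduce the probabilistic edge-query model to the linearly bounded model of Theorem~\ref{th:edge_bounded} by majority-vote amplification, echoing the reduction from Theorem~\ref{th:vertex_lin_bounded} to Theorem~\ref{th:vertex-probabilistic} carried out in Section~\ref{sec:noisy}. The key new ingredient, absent in the vertex case, is that the tolerable rate in the edge-query linearly bounded model is only $\frac{1}{\Delta+1}$, whereas $p=\frac{1}{2}(1-\varepsilon)$ sits far above this threshold. Consequently, an explicit amplification step is needed before the linearly bounded analysis becomes applicable, and this is exactly what will produce the extra $\log \Delta$ factor in the final bound.

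Concretely, I would form super-queries by asking each selected edge $k$ times and returning the majority reply. A Hoeffding bound yields that a super-query is incorrect with probability at most $p' \le \exp(-k\varepsilon^2/2)$, so the choice $k = \Theta(\varepsilon^{-2}\log\Delta)$ (with an appropriate constant) drives $p'$ below $\frac{1}{4(\Delta+1)}$. I would then run the strategy of Theorem~\ref{th:edge_bounded} on the sequence of super-queries, with the linearly bounded rate $r = \frac{1}{\Delta+1}(1-\varepsilon_0)$ for a parameter $\varepsilon_0$ to be tuned. This yields a super-query budget $Q = 2\varepsilon_0^{-2}\Delta \ln n$. Finally, the number of super-query lies is a sum of $Q$ independent Bernoullis with mean $p'Q$, so a further Hoeffding bound gives $\Pr[\text{super-lies} > rQ] \le \exp(-2(r-p')^2 Q)$; fixing $\varepsilon_0$ so that this is at most $\delta$, exactly as in Section~\ref{sec:noisy}, controls the overall failure probability. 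Multiplying the super-query count $Q$ by the $k$ raw queries per super-query then produces the claimed $\bigo(\varepsilon^{-2}\Delta\log\Delta(\log n + \log \delta^{-1}))$ total.

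The main subtlety I foresee is getting $\log \delta^{-1}$ to appear \emph{additively} rather than multiplicatively in the final bound. A naive implementation (fixed $k$ and Theorem~\ref{th:edge_bounded} applied as a black box) leaves the Hoeffding gap $r - p' = \Theta(1/\Delta)$, which requires an additional factor of $\Delta$ in the $\log \delta^{-1}$ term --- strictly weaker than the stated bound. To avoid this, rather than invoking Theorem~\ref{th:edge_bounded} verbatim I would work directly with the exact bound of Theorem~\ref{th:edges_queries_exact} on the super-queries, taking $\Gamma$ close to $1$ and setting the fixed-lie parameter $\numlies := p'Q + c\sqrt{Q\log \delta^{-1}}$, thereby absorbing the confidence-probability slack into the allowed lie count rather than into the error-rate margin. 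This is precisely the same mechanism by which the $\log \delta^{-1}$ term is made additive in the vertex-probabilistic proof; the analogous tuning, together with the extra $k$ factor from amplification, should yield the theorem as stated.
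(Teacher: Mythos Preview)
Your overall plan---majority-amplify each edge query by a factor $k=\Theta(\varepsilon^{-2}\log\Delta)$ to push the per-query error below $\frac{1}{\Delta+1}$, then run the linearly bounded strategy of Theorem~\ref{th:edge_bounded} and control the number of super-query lies by a concentration bound---is exactly the paper's approach, and your identification of the extra-$\Delta$ danger in the $\log\delta^{-1}$ term is on point.

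Where you diverge is in the fix. You propose retreating to the exact bound of Theorem~\ref{th:edges_queries_exact}, choosing $\Gamma$ close to $1$, and setting $\numlies = p'Q + c\sqrt{Q\log\delta^{-1}}$. This can be made to work but introduces an implicit equation for $Q$ and a second round of parameter tuning. The paper's resolution is simpler: it replaces the additive Hoeffding bound $\exp(-2(r-p')^2 Q)$ with the \emph{multiplicative} Chernoff bound $\exp\!\big(-\tfrac13(r/p'-1)^2\, p'Q\big)$. After boosting, $p'=\Theta(1/\Delta)$ while $Q=\Theta(\varepsilon_1^{-2}\Delta\ln n)$, so the product $p'Q=\Theta(\varepsilon_1^{-2}\ln n)$ is $\Delta$-free; tuning $\varepsilon_1$ exactly as in Section~\ref{sec:noisy} then yields the additive $\log\delta^{-1}$ with no leftover $\Delta$. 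Structurally, the paper also swaps the order of the two reductions: it first establishes the probabilistic result for any $p_0<\frac{1}{\Delta+1}$ (via Theorem~\ref{th:edge_bounded} plus Chernoff, already with additive $\log\delta^{-1}$), and only afterwards layers the $k$-fold repetition on top to handle $p=\frac12(1-\varepsilon)$. Both orderings are fine, but the paper's makes the $\Delta$-cancellation in the Chernoff step more transparent.
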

%We also observe that using e.g. $\Gamma = 2$ in Algorithm~\ref{str:graph_edges_fixed} yields a strategy of length $\bigo(\Delta(\numlies + \log n))$ for a fixed number of $\numlies$ lies.

\subsection*{Proof of Theorem~\ref{th:edges_queries_exact}}
\label{sec:edge-queries-with-lies}
We first prove two technical lemmas and then we give the proof of the theorem.
\begin{lemma} \label{lem:degBound}
Let $\Gamma>1$.
Suppose that Algorithm~\ref{str:graph_edges_fixed} queries in round $t+1$ an edge $e_q$ incident to a vertex $q$ such that $e_q = \arg\min_{x \in E} \Phi_{t}(x)$.
If $\deg(q)>1$, then
\begin{equation} \label{eq:Phi-deg}
\mu_t(\overline{N(e_q,q)}) \geq \frac{1}{\deg(q)} ( \mu_t - \mu_{t}(q) ).
\end{equation}
\end{lemma}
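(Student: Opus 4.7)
The plan is to combine two observations. First, because the graph is unweighted (as stated for edge queries in this appendix), every vertex $v \neq q$ is strictly closer to some neighbor $u$ of $q$ than to $q$ itself; this lets me cover $V \setminus \{q\}$ by the $\deg(q)$ ``half-spaces'' associated with the edges incident to $q$. Second, the optimality of $e_q = \arg\min_{x \in E} \Phi_t(x)$ forces $\overline{N(e_q, q)}$ to be the heaviest of those half-spaces. Chaining these two gives the claimed $1/\deg(q)$ bound.

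For the first observation, define $B_u = \overline{N(\{q, u\}, q)}$ for each neighbor $u$ of $q$, i.e., the set of vertices strictly closer to $u$ than to $q$. Given any $v \neq q$, take the first edge $\{q, u\}$ of a shortest $q$-$v$ path; unweightedness gives $d(u, v) = d(q, v) - 1$, hence $v \in B_u$. So $V \setminus \{q\} \subseteq \bigcup_{u \sim q} B_u$, and a union bound yields $\mu_t - \mu_t(q) \le \sum_{u \sim q} \mu_t(B_u) \le \deg(q) \cdot \max_{u \sim q} \mu_t(B_u)$.

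For the second observation, I would compute $\Phi_t(e)$ for each edge $e = \{q, u\}$ incident to $q$. Splitting $V$ into $B_u$ and its complement, and using $d(e, v) = d(q, v) - 1$ on $B_u$ and $d(e, v) = d(q, v)$ off $B_u$, I get $\Phi_t(e) = \sum_{v \in V} \mu_t(v)\, d(q, v) \;-\; \mu_t(B_u)$. The first term is independent of the choice of $u$, so among edges incident to $q$ the one minimizing $\Phi_t$ is exactly the one maximizing $\mu_t(B_u)$. Since $e_q$ is globally optimal in $E$, it is optimal on this subset too, so $\mu_t(\overline{N(e_q, q)}) = \max_{u \sim q} \mu_t(B_u)$, and combining with the first observation gives the lemma.

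The main obstacle, really the only non-routine point, is the identity $\Phi_t(e) = \sum_v \mu_t(v)\, d(q, v) - \mu_t(B_u)$: it requires carefully tracking by how much $d(e, v)$ beats $d(q, v)$ on each vertex, and relies essentially on the unweighted assumption to conclude that this gain is uniformly $1$ on $B_u$ (and $0$ off it). Once that identity is in place, converting minimization of $\Phi_t$ over $E$ into maximization of $\mu_t(B_u)$ over neighbors of $q$ is immediate, and the rest is the averaging bound from the covering step.
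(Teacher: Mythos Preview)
Your proof is correct. Both arguments rest on the same two pillars---a covering of $V\setminus\{q\}$ by ``half-spaces'' attached to the edges at $q$, and the optimality of $e_q$ among those edges---but the executions differ. The paper fixes the reference set $N(e_q,q)$, introduces the intersections $\Ncap{w}=N(e_q,q)\cap N_<(\{q,w\},w)$, picks the single competitor $e'=\{q,w\}$ maximizing $\mu_t(\Ncap{w})$, and bounds $\Phi_t(e')-\Phi_t(e_q)$ by splitting into the parts inside and outside $N(e_q,q)$; this produces a $\frac{1}{\deg(q)-1}$ factor that only becomes $\frac{1}{\deg(q)}$ after recombining with the $\mu_t(\overline{N(e_q,q)})$ term. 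You instead compute $\Phi_t(\{q,u\})=\sum_v\mu_t(v)\,d(q,v)-\mu_t(B_u)$ in closed form for \emph{every} neighbor $u$, which turns the optimality of $e_q$ directly into $\mu_t(\overline{N(e_q,q)})=\max_{u}\mu_t(B_u)$ and lets the averaging bound finish in one step. Your route is shorter and avoids the auxiliary sets $\Ncap{w}$ entirely; as a side effect it does not actually need the hypothesis $\deg(q)>1$, which in the paper's proof is there only to keep the $\deg(q)-1$ denominator nonzero.
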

\begin{proof}
Denote $e_q=\{q,v\}$.
For each neighbor $w$ of $q$ define
\[\Ncap{w}=N(e_q,q) \cap N_{<}(\{q,w\},w).\]
Consider an edge $e' = \{q,w\}$ that maximizes $\mu_{t}(\Ncap{w})$.
If $X$ is the set of neighbors of $q$, then by definition and by the fact that $e_q$ lies on no shortest path from $q$ to any vertex in $N_{<}(e_q,v)$, i.e., $\Ncap{v}=\emptyset$, it holds
\[N(e_q,q)\setminus\{q\} \subseteq \bigcup_{w'\in X}\Ncap{w'} = \bigcup_{w'\in X\setminus\{v\}} \Ncap{w'}.\]
Hence (since $e'$ maximizes $\mu_t(\Ncap{w})$) we obtain that
\begin{equation} \label{eq:Ncap}
\mu_{t}(\Ncap{w}) \geq  \frac{1}{\deg(q) - 1} ( \mu_{t}(N(e_q,q)) - \mu_{t}(q) ).
\end{equation}

For brevity, we extend our notation in the following way: for an edge $e$ and a subset $S$ of vertices, $\Phi_t(e,S)=\sum_{z\in S}\mu_t(z)\cdot d(e,z)$.
Note that for any $S\subseteq V$ and any edge $e$, $\Phi_t(e)=\Phi_t(e,S)+\Phi_t(e,\overline{S})$.
We obtain
\begin{align} \label{eq:Psi-e-prime}
\begin{split}
\Phi_t (e', N(e_q,q)) &= \Phi_t(e',\Ncap{w}) +\Phi_t(e', N(e_q,q) \setminus \Ncap{w})
\\ &= \sum_{u \in \Ncap{w}} \mu_{t}(u) \cdot (d(q,u) - 1) + \sum_{u \in N(e_q,q) \setminus \Ncap{w}} \mu_{t}(u) \cdot d(q,u)
\\ &= \sum_{u \in N(e_q,q)} \mu_{t}(u) \cdot d(q,u) - \mu_{t}(\Ncap{w})
\\ &\le \Phi_t (e_q, N(e_q,q)) - \frac{1}{\deg(q) - 1} ( \mu_{t}(N(e_q,q)) - \mu_{t}(q) ),
\end{split}
\end{align}
where the last inequality is due to~\eqref{eq:Ncap}.
For any vertex $u$, $d(e',u) \leq d(e_q,u) + 1$ because $e_q$ and $e'$ are adjacent.
Using this fact we obtain: 
\begin{align} \label{eq:Psi-e-prime2}
\begin{split}
\Phi_t (e', \overline{N(e_q,q)}) &= \sum_{u \notin N(e_q,q)} \mu_{t}(u) \cdot d(e',u)
\\ &\leq \sum_{u \notin N(e_q,q)} \mu_{t}(u) \cdot d(e_q,u) + \sum_{u \notin N(e_q,q)} \mu_{t}(u)
\\ &= \Phi_t (e_q, \overline{N(e_q,q)}) + \mu_{t}(\overline{N(e_q,q)}).
\end{split}
\end{align}
Finally, by~\eqref{eq:Psi-e-prime} and~\eqref{eq:Psi-e-prime2} we get:
\begin{align*}
\Phi_t(e') &= \Phi_t (e', N(e_q,q)) + \Phi_t (e', \overline{N(e_q,q)})
\\ &\leq \Phi_t (e_q, N(e_q,q)) - \frac{ \mu_{t}(N(e_q,q)) - \mu_{t}(q) }{\deg(q) - 1} + \Phi_t (e_q, \overline{N(e_q,q)}) + \mu_{t}(\overline{N(e_q,q)}) 
\\ &= \Phi_t(e_q) + \mu_{t}(\overline{N(e_q,q)}) -  \frac{1}{\deg(q) - 1} ( \mu_{t}(N(e_q,q)) - \mu_{t}(q) ).
\end{align*}
By assumption, $\Phi_t(e_q)\leq\Phi_t(e')$.
Therefore,
\[\frac{1}{\deg(q) - 1} ( \mu_{t}(N(e_q,q)) - \mu_{t}(q) ) \leq \mu_{t}(\overline{N(e_q,q)}),\]
which can be rewritten as in~\eqref{eq:Phi-deg}.
\end{proof}

\begin{lemma} \label{lem:edge-no-answer}
Let $\Gamma>1$.
Suppose that Algorithm~\ref{str:graph_edges_fixed} queries in round $t+1$ an edge incident to a vertex $q$ that is not $\frac{1}{\Delta+1}$-heavy in this round, and the answer is $q$.
Then, $\mu_{t+1}\leq (1- \frac{\Gamma-1}{\Gamma(\Delta+1)})\mu_{t}$.
\end{lemma}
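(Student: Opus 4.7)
The plan is to rewrite $\mu_{t+1}$ in a way that isolates the ``progress'' made on the incompatible side, and then to lower-bound the mass sitting on that side.  Let $e_q=\{q,v\}$ denote the queried edge.  A reply of $q$ declares $q$ to be at least as close to the target as $v$, so the set of compatible vertices is precisely $N(e_q,q)$ and every $u\in\overline{N(e_q,q)}$ has its weight divided by~$\Gamma$.  Hence
\begin{equation*}
\mu_{t+1} \;=\; \mu_t(N(e_q,q)) + \tfrac{1}{\Gamma}\,\mu_t(\overline{N(e_q,q)}) \;=\; \mu_t - \tfrac{\Gamma-1}{\Gamma}\,\mu_t(\overline{N(e_q,q)}),
\end{equation*}
so the entire lemma reduces to proving the single inequality $\mu_t(\overline{N(e_q,q)}) \ge \mu_t/(\Delta+1)$.

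I would then split on $\deg(q)$.  The non-heaviness hypothesis on $q$ places the algorithm in the else branch (no vertex is $\tfrac{1}{\Delta+1}$-heavy, for otherwise the if branch would fire on that vertex), so $e_q$ is a $1$-edge-median and Lemma~\ref{lem:degBound} is available whenever $\deg(q)\ge 2$.  That lemma gives
\begin{equation*}
\mu_t(\overline{N(e_q,q)}) \;\ge\; \tfrac{1}{\deg(q)}\bigl(\mu_t - \mu_t(q)\bigr) \;\ge\; \tfrac{1}{\Delta}\cdot\tfrac{\Delta}{\Delta+1}\,\mu_t \;=\; \tfrac{\mu_t}{\Delta+1},
\end{equation*}
using $\deg(q)\le\Delta$ and $\mu_t(q)\le\mu_t/(\Delta+1)$.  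Substituting back into the identity above yields exactly $\mu_{t+1}\le\bigl(1-\tfrac{\Gamma-1}{\Gamma(\Delta+1)}\bigr)\mu_t$.

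The degenerate case $\deg(q)=1$ needs a direct argument because Lemma~\ref{lem:degBound} was stated only for $\deg(q)>1$.  Here $v$ is the unique neighbor of $q$, every other vertex $u$ satisfies $d(q,u)=1+d(v,u)>d(v,u)$, so $N(e_q,q)=\{q\}$; consequently $\mu_t(\overline{N(e_q,q)}) = \mu_t-\mu_t(q) \ge \tfrac{\Delta}{\Delta+1}\mu_t \ge \tfrac{\mu_t}{\Delta+1}$, which is even stronger than needed.

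The only conceptually nontrivial step is the invocation of Lemma~\ref{lem:degBound}, which hinges on $e_q$ being a $1$-edge-median; I would therefore make explicit at the outset that the hypothesis forces the algorithm into its else branch in round $t+1$.  Everything after that is arithmetic, and the bound $\deg(q)\le\Delta$ is what converts the $\deg(q)$-dependent estimate from Lemma~\ref{lem:degBound} into a $\Delta$-dependent one matching the claimed expression.
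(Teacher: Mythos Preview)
Your proof is correct and follows essentially the same approach as the paper's: express $\mu_{t+1}=\mu_t-\tfrac{\Gamma-1}{\Gamma}\mu_t(\overline{N(e_q,q)})$, invoke Lemma~\ref{lem:degBound} for $\deg(q)\ge 2$, and treat the leaf case $\deg(q)=1$ directly. Your handling of the degenerate case via $\mu_t(\overline{N(e_q,q)})=\mu_t-\mu_t(q)\ge\tfrac{\Delta}{\Delta+1}\mu_t$ is in fact a bit tighter than the paper's intermediate bound $\tfrac{1}{\Delta+1}+\tfrac{1}{\Gamma}$.
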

\begin{proof}
Let $e_q=\{q,v\}$ be the edge queried in round $t+1$.
Suppose first that $\deg(q)>1$.
By Lemma~\ref{lem:degBound},
\begin{equation} \label{eq:Phi-overline}
\mu_t(\overline{N(e_q,q)}) \geq \frac{1}{\deg(q)} ( \mu_t - \mu_{t}(q) ) \geq \frac{1}{\Delta} ( \mu_t - \mu_{t}(q) ).
\end{equation}
Because $e_q$ is the queried edge in round $t+1$ and the reply is $q$, the lie counter remains unchanged for the vertices in $N(e_q,q)$ and decreases by one in the complement $\overline{N(e_q,q)}$.
Hence,
\[\mu_{t+1} = \mu_t(N(e_q,q))+\frac{1}{\Gamma}\mu_t(\overline{N(e_q,q)}) = \mu_t - \frac{\Gamma-1}{\Gamma} \mu_{t}(\overline{N(e_q,q)}).\]
Thus, by~\eqref{eq:Phi-overline} and by the fact that $\mu_t(q)\leq\frac{1}{\Delta+1}\mu_t$ for $q$ that is not $\frac{1}{\Delta+1}$-heavy in round $t$,
\[\mu_{t+1} \leq \left( 1 - \frac{\Gamma-1}{\Gamma\Delta}\cdot\frac{\Delta}{\Delta+1} \right) \mu_t,\]
which completes the proof in the case when $\deg(q)>1$.

If $\deg(q)=1$, then in round $t$ the lie counter increases by one for each vertex in $\overline{q}$.
Thus, again by the fact that $q$ is not $\frac{1}{\Delta+1}$-heavy,
\[\mu_{t+1} = \mu_{t}(q) + \frac{1}{\Gamma}\mu_{t}(\overline{q}) \leq \left( \frac{1}{\Delta+1} + \frac{1}{\Gamma} \right) \mu_{t} \le \left(1 - \frac{\Gamma-1}{\Gamma(\Delta+1)}\right)\mu_t. \qedhere\]
\end{proof}

Having proved the technical lemmas, we now turn to the proof of Theorem~\ref{th:edges_queries_exact}.
It is enough to argue that every query, amortized, multiplies the weight by a factor of $1- \frac{\Gamma-1}{\Gamma(\Delta+1)} = 1/(1 + \frac{\Gamma-1}{\Gamma\Delta+1})$.
If there is no $\frac{1}{\Delta+1}$-heavy vertex, then the theorem follows from Lemma~\ref{lem:edge-no-answer}.
Hence suppose in the rest of the proof that there exists a $\frac{1}{\Delta+1}$-heavy vertex and denote this vertex by $q$.

For the amortized analysis, consider a sequence of $t$ consecutive queries to edges $e_1,\ldots,e_{t}$, $t\leq\deg(q)$, performed while $q$ is $\frac{1}{\Delta+1}$-heavy; call such a sequence a \emph{segment}.
Suppose this sequence starts in round $t'$.
Denote $e_i=\{q,v_i\}$, $i\in\{1,\ldots,t\}$, and let
\begin{align*}
&Q_1=\bigcup_{i=1}^t N_<(e_i,v_i), &Q_2=V\setminus(Q_1\cup\{q\}).
\end{align*}

First we assume that the query in round $t'+t$ (i.e., the query that follows the sequence) does not return $v$ as a reply, or $v$ stops being $\frac{1}{\Delta+1}$-heavy.
We argue, informally speaking, that this query in round $t'+t$ amortizes the $t$ queries prior to it thanks to the assumption $t \le \deg(q)$.
Because the lie counter of $q$ increments in round $t'+t$,
\begin{equation} \label{eq:q1}
 \mu_{t'+t}(q) \leq \frac{1}{\Gamma}\mu_{t'}(q).
\end{equation}
We have $\mu_{t'+t}(Q_1) \leq \frac{1}{\Gamma}\mu_{t'}(Q_1)$ by the formulation of Algorithm~\ref{str:graph_edges_fixed}, and $\mu_{t'+t}(Q_2) \leq \mu_{t'}(Q_2)$.
Then, $Q_1\cup Q_2=\overline{q}$ and $Q_1\cap Q_2=\emptyset$ imply $\mu_{t'}(Q_1)\leq\mu_{t'}(\overline{q})-\mu_{t'}(Q_2)$ and hence
\begin{equation} \label{eq:Q1a}
\mu_{t'+t}(Q_1) + \mu_{t'+t}(Q_2) \leq \frac{1}{\Gamma}\mu_{t'}(\overline{q}) +\frac{\Gamma-1}{\Gamma}\mu_{t'}(Q_2).
\end{equation}
Due to the order according to which the edges $\{q,v_i\}$ are queried, we have
\begin{equation} \label{eq:Q1b}
\mu_{t'}(Q_2) \leq \left(1 - \frac{t}{\deg(q)}\right) \mu_{t'}(\overline{q}) \leq \left(1 - \frac{t}{\Delta}\right) \mu_{t'}(\overline{q}).
\end{equation}
Note that $\mu_{t'}(\overline{q})\leq\frac{\Delta}{\Delta+1}\mu_{t'}$ since by assumption $q$ is $\frac{1}{\Delta+1}$-heavy in round $t'$.
Since $\mu_{t'+t} = \mu_{t'+t}(q) + \mu_{t'+t}(Q_1) + \mu_{t'+t}(Q_2)$, we get by~\eqref{eq:q1},~\eqref{eq:Q1a} and~\eqref{eq:Q1b}:
\[\mu_{t'+t}  \leq \left(\frac{1}{\Gamma} + \frac{\Gamma-1}{\Gamma} \frac{\Delta-t}{\Delta+1} \right)\mu_{t'}
            = \left(1 - \frac{\Gamma-1}{\Gamma} \frac{t  + 1}{(\Delta+1)} \right) \mu_{t'}
            \leq \left(1 - \frac{\Gamma-1}{\Gamma (\Delta+1)}\right)^{t+1} \mu_{t'},
\]
where the last inequality comes from $(1-x)^k \geq 1-xk$, for $k\geq 1$ and $x<1$.

Consider now a maximal sequence $S$ of rounds in which $q$ is $\frac{1}{\Delta+1}$-heavy and is not $\frac{1}{\Delta+1}$-heavy in the round that follows the sequence.
Note that Algorithm~\ref{str:graph_edges_fixed} cyclically queries the edges incident to $q$ in $S$.
Let $r_1'\leq\cdots\leq r_{b'}'$ be all rounds in $S$ having $q$ as an answer.
Denote $X=S\setminus\{r_1',\ldots,r_{b'}'\}$, the set of rounds in $S$ in which $q$ is not an answer.
Let $a=\lceil b'/\deg(q)\rceil$.
The lie counter of each vertex in $\overline{q}$ increases by at least $a-1$ and by at most $a$ times by executing $S$ --- we point out that this crucial property follows from the fact that the queries in the segment are applied to the edges incident to $q$ consecutively modulo $\deg(v)$.
Since $q$ is $\frac{1}{\Delta+1}$-heavy at the beginning of $S$ and is not $\frac{1}{\Delta+1}$-heavy right after $S$, the lie counter of $q$ increases by at least $a$ as a result of $S$.
Hence, $|X|\geq a$.
Partition $r_1',\ldots,r_{b'}'$ into a minimum number of segments of length at most $\deg(q)$ each, which leads to at most $a$ segments.
Thus, we can pair these segments with rounds in $X$.
For each such pair of at most $\deg(q)+1$ rounds we apply the amortized analysis performed above.
Note that this approach is valid since the amortized analysis is insensitive of the order of appearance of the queries in $X$ and the queries in $S\setminus X$.

Finally, suppose that there is a series of queries at the end of the strategy (a suffix) performed to edges incident to a $\frac{1}{\Delta+1}$-heavy vertex $q$ such that all replies point to $q$ and $q$ remains $\frac{1}{\Delta+1}$-heavy till the end of the strategy.
Note that in such a case $q$ is the target.
The vertex $q$ had the uniquely smallest lie counter just before those queries.
This in particular implies that the lie counter is strictly smaller than $\numlies$.
We artificially add a sequence of \emph{pseudo-queries}, each of which increments the lie counter of $q$ until it reaches $\numlies$.
This implies that the suffix of the search strategy now consists of a reply (which comes from a regular query or a pseudo-query) which does not point to $q$.
Thus, we use again the arguments from our amortized analysis: we can find a segment and pair with it the above mentioned query pointing away from $q$.

\subsection*{Proof of Theorem~\ref{th:edge_bounded}}

Similarly as in the case of vertex queries, the generic strategy in Algorithm~\ref{str:graph_edges_fixed} for edge queries and its corresponding bound for a fixed number of lies can be used to provide strong bounds for linearly bounded and probabilistic error models.

Let $\Gamma = 1+\frac{\Delta+1}{\Delta} \frac{\varepsilon}{1-\varepsilon} = \frac{1-r}{r} \cdot \frac{1}{\Delta}$. Denote $Q_{\min} = \frac{ \ln n}{\ln(1+ \frac{\Gamma-1}{\Gamma\Delta+1}) -  r \ln \Gamma}$. We run Algorithm~\ref{str:graph_edges_fixed} with bound $\numlies = Q_{\min} r$ and parameter $\Gamma$ set as just mentioned above.
Then, by Theorem~\ref{th:edges_queries_exact}, the length of the strategy is at most $\frac{1}{\ln(1+ \frac{\Gamma-1}{\Gamma\Delta+1})} \cdot \ln n +  \frac{\ln \Gamma}{\ln(1+ \frac{\Gamma-1}{\Gamma\Delta+1})} \cdot Q_{\min} r = Q_{\min} = L/r$. To conclude the proof, we bound
$$Q_{\min} =  \frac{\ln n}{F(\varepsilon)/(\Delta+1)+F(-\varepsilon/\Delta) \cdot \Delta/(\Delta+1)}=$$
(where $F(x) \defeq x + (1-x) \ln(1-x) = \sum_{i=2}^{\infty} \frac{x^i}{i(i-1)}$)
$$= \frac{\ln n}{\sum_{i=2}^{\infty} \frac{\varepsilon^i}{i(i-1)} \frac{\Delta^{i-1}+(-1)^{i}}{(\Delta+1)\Delta^{i-1}} } \le  \frac{\ln n}{\varepsilon^2/(2\Delta)} = 2\varepsilon^{-2}\Delta\ln n.$$

\subsection*{Proof of Theorem~\ref{th:edge-probabilistic}}

For edge queries, we use a two step approach: first, we repeatedly ask queries to boost their error rate from $\sim 1/2$ to below $1/(\Delta+1)$, and then use the linearly bounded error strategy.

As a first step, we show that for $p_0 = \frac{1}{\Delta+1}(1-\varepsilon_0)$, there exists a strategy that locates the target with high probability using $\bigo(\Delta \log n / \varepsilon_0^2)$ edge queries. Indeed, assume without loss of generality that $\varepsilon_0 < 1/2$.  We fix $\varepsilon_1 = \varepsilon_0 / (1+\sqrt{\frac32 \frac{\Delta+1}{\Delta} \ln \delta^{-1}/\ln n}) $ , and use Theorem~\ref{th:edge_bounded} with error rate $r_0 = \frac{1}{\Delta+1}(1-\varepsilon_1)$.  By Theorem~\ref{th:edge_bounded}, we obtain that the strategy length is $Q = 2 \varepsilon_1^{-2} \Delta \ln n  = \bigo(\Delta \varepsilon_0^{-2} (\log n + \log \delta^{-1}))$.
The expected number of lies is $\mathbb{E}[L] =p_0 \cdot Q$ and by the Chernoff bound,
$$\text{Pr}[L \ge r_0 \cdot Q ] \le \exp\left(-\frac13\left(\frac{r_0}{p_0}-1\right)^2 \cdot p_0 \cdot Q\right) \le \exp\left(-\frac23 \left(\frac{\varepsilon_0-\varepsilon_1}{\varepsilon_1}\right)^2 \cdot \frac{\Delta}{\Delta+1}  \ln n\right) = \delta.$$ 

We now observe that to achieve the error rate of $\frac12(1-\varepsilon)$, we can boost the query error rate to be smaller by repeating the same query multiple times and taking the majority answer. By repeating each query $P = \bigo( \log(2\Delta+2) \cdot \varepsilon^{-2})$ times, we get the correct answer with probability $1-p' = 1-\frac{1}{2} \cdot \frac{1}{\Delta+1}$, and as shown already, we only need $\bigo( \Delta (\log n + \log \delta^{-1}))$ queries with the error rate $p'$ to locate the target with probability at least $1-\delta$. Thus the claimed bound follows.

As an immediate corollary we obtain a very simple strategy for noisy binary search in an integer range of complexity $\bigo(\varepsilon^{-2}(\log n + \log \delta^{-1}))$ matching \cite{FeigeRPU94}.

\section{Application: Searching Unbounded Integer Ranges} \label{sec:unbounded}
Building on our generic strategies, we now obtain a general technique for searching an unbounded domain $\mathbb{N} = \{1,2,\ldots\}$ with comparison queries.
Here the measure of complexity is the dependency on the error rate (number of lies) and on $N$, the (initially unknown) position of the target. The main idea is to use Algorithms~\ref{str:graph_vertices_fixed} and~\ref{str:graph_edges_fixed}, tweaking the initial weight distribution. We fix the \emph{initial} weight of an integer $n$ to be $\mu_0(n) = n^{-2}$.
The total initial weight then equals $\pi^2/6=\Theta(1)$.
We provide the following bounds.\footnote{We note that the term \emph{ternary} refers to a model in which each query selects an integer $i$ and as a response receives information whether the target is smaller than $i$, equals $i$, or is greater than $i$.}

\begin{corollary}
\label{th:unbounded_fixed_vert}
There exists a strategy that finds an integer in an unbounded integer range ($\mathbb{N}$) using at most
\begin{itemize}
\item $\frac{\log \frac{\pi^2}{6} + 2 \log N + \numlies \log \Gamma }{\log\frac{2\Gamma}{\Gamma+1}}$ ternary queries, or
\item $\frac{\log \frac{\pi^2}{6}  + 2 \log N + \numlies \log \Gamma }{\log\frac{3\Gamma}{2\Gamma+1}}$ binary (comparison) queries,
\end{itemize}
where $N$ is the target, $\numlies$ is an upper bound on the number of (adversarial) lies and $\Gamma>1$ is an arbitrarily selected coefficient.
\end{corollary}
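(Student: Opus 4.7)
I would view $\mathbb{N}$ with ternary queries as the infinite path graph in which vertex $i$ is adjacent to $i{+}1$: a ternary query at $i$ (returning ``target is $< i$, $= i$, or $> i$'') is exactly the vertex query at $i$, and a binary comparison at $i$ is the edge query on $\{i,i{+}1\}$. In both cases the maximum degree is $\Delta = 2$. I would then run Algorithm~\ref{str:graph_vertices_fixed} (for the ternary bound) and Algorithm~\ref{str:graph_edges_fixed} with $\Delta=2$ (for the binary bound), changing only the initialization step from $\mu_0(v)\gets 1$ to $\mu_0(n) \gets n^{-2}$, so that the total initial weight is $\sum_{n\geq 1} n^{-2} = \pi^2/6$.

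The crucial observation is that the amortization arguments proving Theorems~\ref{th:vertex_queries_exact} and~\ref{th:edges_queries_exact} depend only on (i) the multiplicative update $\mu(u)\mapsto\mu(u)/\Gamma$ upon a lie and (ii) the balancing property of the queried vertex/edge, namely that the weight on each ``side'' of a median is at most $\mu/2$ (and analogously $\mu/(\Delta+1)$ in the edge-query case). Neither property refers to the initial weights. Thus after $T$ rounds the amortized bounds from those theorems give
\[
\mu_T \;\leq\; \frac{\pi^2}{6}\cdot \rho^T,
\qquad\text{with } \rho = \frac{\Gamma+1}{2\Gamma} \text{ (vertex)} \text{ or } \rho = \frac{2\Gamma+1}{3\Gamma} \text{ (edge, }\Delta=2\text{).}
\]
Since at most $L$ lies occur, the target $N$ never has its lie counter exceed $L$, so $\mu_T \geq \mu_T(N) = N^{-2}\Gamma^{-\ell_N} \geq N^{-2}\Gamma^{-L}$. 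Combining the two bounds and taking logarithms yields
\[
T \;\leq\; \frac{\log\tfrac{\pi^2}{6} + 2\log N + L\log\Gamma}{\log(1/\rho)},
\]
which is exactly the two stated inequalities once one substitutes $1/\rho = 2\Gamma/(\Gamma+1)$ or $3\Gamma/(2\Gamma+1)$.

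The one technical point that needs care, and which I expect to be the main obstacle, is that on an infinite graph the potential $\Phi(v) = \sum_u \mu(u)\,d(v,u)$ may be infinite, so ``$\arg\min$'' is not literally well-defined. I would handle this by reinterpreting the queried vertex (resp.\ edge) as any one satisfying the balancing property actually used in the proofs of Lemmas~\ref{lem:no}, \ref{lem:yes} (for vertex queries) and \ref{lem:degBound}, \ref{lem:edge-no-answer} (for edge queries): for vertex queries, a vertex $q$ with $\mu(\{1,\dots,q-1\}) \leq \mu/2$ and $\mu(\{q+1,\dots\}) \leq \mu/2$; for edge queries, an edge enjoying the analogous local imbalance bound on the path. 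Such choices always exist because the weight distribution is summable. Termination follows because once enough rounds have passed, only finitely many vertices retain $\ell_x \leq L$, and the analysis reduces to that of the bounded-domain algorithms.
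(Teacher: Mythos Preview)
Your proposal is correct and follows essentially the same approach as the paper: run Algorithms~\ref{str:graph_vertices_fixed} and~\ref{str:graph_edges_fixed} with the modified initial weights $\mu_0(n)=n^{-2}$, invoke the amortized weight-decrease analyses underlying Theorems~\ref{th:vertex_queries_exact} and~\ref{th:edges_queries_exact} to get $\mu_T\le(\pi^2/6)\rho^T$, combine with the lower bound $\mu_T\ge N^{-2}\Gamma^{-L}$, and take logarithms. Your explicit treatment of the infinite-potential issue (replacing the $\arg\min$ by a vertex/edge satisfying the requisite balancing property on the path) is in fact more careful than the paper, which simply passes over this point.
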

\begin{proof}
We use Algorithm~\ref{str:graph_vertices_fixed} for ternary queries; let the strategy length be $Q$.
By the proof of Theorem~\ref{th:vertex_queries_exact}, $\mu_Q\leq (\frac{2\Gamma}{\Gamma+1})^Q \cdot \frac{\pi^2}{6}$.
The final weight is at least $\mu_Q \ge N^{-2} \cdot \Gamma^{-\numlies}$, and the bound for ternary queries follows since the number of queries is at most $\log( \frac{\pi^2/6}{N^{-2}\Gamma^{-\numlies}} )/\log\frac{2\Gamma}{\Gamma+1}$.

The bound for binary queries is obtained analogously from Theorem~\ref{th:edges_queries_exact} (note that $\Delta=2$) since we apply Algorithm~\ref{str:graph_edges_fixed} for binary queries.
\end{proof}
Simply setting $\Gamma=2$ yields an $\bigo(\log N + \numlies)$ length strategy with comparison queries on unbounded integer domains with a fixed number of $\numlies$ lies.

We need to restate the linearly bounded error model in the case of unbounded domains since the Responder does not know a priori the length of the strategy.
We define this error model as follows: whenever the Questioner finds the target and thus declares the search to be completed after $t$ rounds, it is guaranteed that at most $rt$ lies have occurred throughout the search.
\begin{corollary} \label{th:unbounded_linearly-bounded}
For the linearly bounded error model with an error rate $r$ and an unbounded integer domain, there exists a strategy that finds the target integer $N$ in:
\begin{itemize}
\item $\bigo(\varepsilon^{-2} \log N)$ ternary queries when $r = \frac12(1-\varepsilon)$, or
\item $\bigo(\varepsilon^{-2} \log N )$ binary queries when $r = \frac13(1-\varepsilon)$.
\end{itemize}
\end{corollary}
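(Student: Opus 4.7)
The plan is to derive this Corollary from Corollary~\ref{th:unbounded_fixed_vert} in the same way that Theorem~\ref{th:vertex_lin_bounded} is derived from Theorem~\ref{th:vertex_queries_exact} and Theorem~\ref{th:edge_bounded} from Theorem~\ref{th:edges_queries_exact}: in the linearly bounded model, after $t$ rounds at most $rt$ lies occur, so I would replace $\numlies$ in the bound of Corollary~\ref{th:unbounded_fixed_vert} by $rt$ and solve for $t$.

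For the ternary case I would run Algorithm~\ref{str:graph_vertices_fixed} with initial weights $\mu_0(n) = n^{-2}$ and tune $\Gamma = (1-r)/r$. Substituting $\numlies = rt$ into the first bound of Corollary~\ref{th:unbounded_fixed_vert} yields
\[
t \cdot \Bigl(\log\tfrac{2\Gamma}{\Gamma+1} - r \log \Gamma\Bigr) \le \log\tfrac{\pi^2}{6} + 2 \log N,
\]
and the left-hand coefficient equals $1 - H(r)$, by the same calculation used in the proof of Theorem~\ref{th:vertex_lin_bounded}. The Taylor expansion of $H$ around $r = 1/2$ gives $1 - H(r) = \Theta(\varepsilon^2)$ when $r = \tfrac{1}{2}(1-\varepsilon)$, hence $t = \bigo(\varepsilon^{-2} \log N)$. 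For binary queries I would apply Algorithm~\ref{str:graph_edges_fixed} (with $\Delta = 2$) and pick $\Gamma = (1-r)/(2r)$ as in the proof of Theorem~\ref{th:edge_bounded}; the corresponding denominator $\log\tfrac{3\Gamma}{2\Gamma+1} - r \log \Gamma$ admits a power-series expansion whose leading term at $r = \tfrac{1}{3}(1-\varepsilon)$ is $\Theta(\varepsilon^2)$, producing the same $\bigo(\varepsilon^{-2} \log N)$ bound.

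The main subtlety I expect is that the algorithms of Corollary~\ref{th:unbounded_fixed_vert} use an a priori bound $\numlies$ in their termination test, whereas here neither $\numlies$ nor the eventual strategy length $t$ (nor $N$) is known in advance. I would address this by a standard doubling scheme on the lie budget: run the algorithm with $\numlies_i = 2^i$ for $i = 0, 1, 2, \ldots$, and verify the returned candidate between epochs against the linearly bounded guarantee. The total query count is geometric in $i$ and dominated by the final epoch, which still satisfies the displayed inequality and thus preserves the claimed asymptotic bound.
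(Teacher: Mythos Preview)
Your first two paragraphs match the paper: substitute $\numlies = rQ$ into the bound of Corollary~\ref{th:unbounded_fixed_vert}, take $\Gamma = (1-r)/r$ (ternary) or $\Gamma = (1-r)/(2r)$ (binary), and solve for $Q$; the resulting denominator is $1-H(r)$ (respectively its $\Delta=2$ edge analogue), which is $\Theta(\varepsilon^2)$ near the threshold.

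Where you diverge is in handling the unknown strategy length. The paper's fix is much simpler than a doubling scheme: it just replaces the stopping rule of Algorithm~\ref{str:graph_vertices_fixed} (resp.\ Algorithm~\ref{str:graph_edges_fixed}) by ``halt once only a single $n$ satisfies $\ell_n \le rQ$'', with $Q$ the current round count. Soundness is immediate from the linearly bounded guarantee (the target always has $\ell_{v^*} \le rQ$), and termination is exactly the inequality you already wrote down in your first paragraph. No epochs, no verification step, no a~priori knowledge of $N$ needed.

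Your doubling proposal, as phrased, is also shakier than it sounds. If each epoch restarts the weights from scratch, then correctness of epoch~$i$'s returned candidate requires $L_i \ge r T_i$ where $T_i$ is the \emph{cumulative} query count across all epochs (the adversary may save all its lies for the current epoch). Writing $b = \log\Gamma / \log\tfrac{2\Gamma}{\Gamma+1}$, epoch lengths satisfy $t_j \approx a + b\,2^j$, so $T_i \approx (i{+}1)a + b(2^{i+1}{-}1)$; the test $2^i \ge r T_i$ then reads $2^i(1-2rb) \ge r(i{+}1)a - rb$, which is vacuous for small $\varepsilon$ because $rb \to 1$ and hence $1-2rb<0$. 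If instead you keep the lie counters across epochs and only raise the threshold $L_i$, the scheme does work --- but then it is precisely the paper's dynamic stopping rule sampled at powers of two, and the doubling machinery is superfluous.
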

\begin{proof}
Consider ternary queries.
We proceed analogously as in the proof of Theorem~\ref{th:vertex_lin_bounded}.
We have that the initial weight is $\pi^2/6$.
Run Algorithm~\ref{str:graph_vertices_fixed} until there is a single $n$ such that $\ell_n \le r \cdot Q$.
Any $Q$ such that $Q \ge \ln (\pi^2/6) / \ln\frac{2\Gamma}{\Gamma+1} + 2\ln N / \ln \frac{2\Gamma}{\Gamma+1} + L \ln \Gamma / \ln \frac{2\Gamma}{\Gamma+1}$ is an upper bound on the length of the strategy.
We thus get an upper bound
$$Q \le 2\varepsilon^{-2} (2\ln N + \bigo(1)).$$
The binary case follows in an analogous manner.
\end{proof}

We now proceed to show an algorithm for searching the ubounded integer range in the probabilistic error model. The challenge lies in the fact, that in all our previous algorithms we reduced the problem to the linearly bounded error model, and we could use an upper bound on the length of the strategy to select a proper relation between $p$ and $r$. However in this particular problem, the linearly bounded strategy could be arbitrarily long as $N \to \infty$.

\begin{figure}
\begin{center}
\begin{minipage}{.7\linewidth}
\begin{algorithm}[H]
\SetAlgoRefName{UNBOUNDED}
	\caption{Searching unbounded integer range in probabilistic error model.}
	\label{str:unbounded_noisy}
	$\delta' \gets \delta/2$\;
	\While{true}
	{
		$n \gets  1/\delta'$\;
		$t \gets \textsf{SEARCH}([n],\delta')$\;
		\If{$t \not= n$}
		{
			\Return $t$\;
		}
		\Else
		{
			$\delta' \gets \delta'^2$\;
		}
	}
\end{algorithm}
\end{minipage}
\end{center}
\end{figure}

Consider Algorithm~\ref{str:unbounded_noisy}. We assume that procedure  $\textsf{SEARCH}(I,\delta)$ implements a noisy binary search, that is, given $I \subset \mathbb{N}$, a parameter $\delta$ and a promise that the target $t$ is in $I$, the procedure returns $t \in I$ correctly with probability at least $1 - \delta$, assuming probabilistic error model (with error probability $p = \frac{1}{2}(1-\varepsilon)$). A strategy of length $\bigo(\varepsilon^{-2}(\log n+\log \delta^{-1}))$ is given in~\cite{FeigeRPU94}, so we can assume that $\textsf{SEARCH}(I,\delta)$ implements that particular strategy, but for our purposes any asymptotically optimal strategy suffices.

\begin{theorem}
\label{th:prob_bin_search_unbounded}
In the probabilistic error model, given integer domain $\mathbb{N}$ and unknown target integer $N$, Algorithm~\ref{str:unbounded_noisy} with probability at least $1 - \delta$ returns $N$ using at most $\bigo(\varepsilon^{-2}  (\log N + \log \delta^{-1}))$ binary queries for $p = \frac{1}{2}(1-\varepsilon)$. The strategy \emph{expected} number of queries is $\bigo(\varepsilon^{-2}  (\log N + \log \delta^{-1}))$ as well.
\end{theorem}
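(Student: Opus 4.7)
The plan is to verify the high-probability correctness, the worst-case query bound in the success event, and the expected query bound, by tracking how fast $\delta'$ shrinks across iterations. Let $\delta'_i$ and $n_i$ denote the values of $\delta'$ and $n$ at the $i$-th pass through the outer loop of Algorithm~\ref{str:unbounded_noisy}, so $\delta'_1 = \delta/2$, $\delta'_{i+1} = (\delta'_i)^2$ and $n_i = 1/\delta'_i$. Unrolling gives $\delta'_i = (\delta/2)^{2^{i-1}}$ and $\log n_i = 2^{i-1} \log(2/\delta)$; in particular the sequence $\log n_i$ doubles at each step. Let $i^*$ be the smallest index with $n_{i^*} \ge N$. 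If $i^* \ge 2$ then $n_{i^*-1} < N$, whence $\log n_{i^*} = 2\log n_{i^*-1} < 2\log N$; if $i^* = 1$ then $\log n_{i^*} = \log(2/\delta)$. Either way, $\log n_{i^*} = \bigo(\log N + \log \delta^{-1})$.

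For correctness I will rely on the convention---which can be imposed on \textsf{SEARCH} without changing its asymptotic cost---that when invoked on $[n]$ with a target $N \notin [n]$ the procedure returns the endpoint $n$ with probability at least $1-\delta'$, and returns $N$ with probability at least $1-\delta'$ when $N \in [n]$. The outer algorithm errs only if at some iteration $i \le i^*$ the call to \textsf{SEARCH} deviates from its intended answer (returning some $t \ne n_i$ when $i < i^*$, or $t \ne N$ when $i = i^*$). A union bound then caps the failure probability by $\sum_{i=1}^{i^*} \delta'_i \le \sum_{i \ge 1}(\delta/2)^{2^{i-1}}$. Since $\delta'_{i+1}/\delta'_i = \delta'_i \le 1/2$, this series is dominated by a geometric one with ratio $1/2$ and sums to at most $2\delta'_1 = \delta$.

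For the worst-case query count in the success event, iteration $i$ costs $\bigo(\varepsilon^{-2}(\log n_i + \log(\delta'_i)^{-1})) = \bigo(\varepsilon^{-2}\log n_i)$ queries, using $n_i = 1/\delta'_i$. Summing is again geometric and dominated by the last term: $\sum_{i=1}^{i^*}\log n_i = \log(2/\delta)\,(2^{i^*}-1) \le 2\log n_{i^*} = \bigo(\log N + \log \delta^{-1})$.

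For the expected complexity the only additional cost comes from iterations $> i^*$, which occur only when \textsf{SEARCH} misreports at every iteration from $i^*$ onwards. Reaching iteration $i^*+k$ requires that all $k$ of these calls returned their endpoint rather than $N$, an event of probability at most $\prod_{j=0}^{k-1}\delta'_{i^*+j} \le (\delta'_{i^*})^k$. Combined with the per-iteration cost $\bigo(\varepsilon^{-2}\log n_{i^*+k}) = \bigo(\varepsilon^{-2}\,2^k\log n_{i^*})$ this gives expected extra cost $\bigo(\varepsilon^{-2}\log n_{i^*})\sum_{k\ge 1}(2\delta'_{i^*})^k = \bigo(\varepsilon^{-2}\log n_{i^*})$, since $\delta'_{i^*} \le 1/2$. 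The main subtlety is that Algorithm~\ref{str:unbounded_noisy} repeatedly calls \textsf{SEARCH} on ranges that may not contain the target, violating the nominal promise; the plan handles this via the boundary convention above, which any asymptotically optimal noisy binary search (such as the one of~\cite{FeigeRPU94}) can be adapted to satisfy.
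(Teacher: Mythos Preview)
Your argument tracks the paper's almost exactly: union bound over the $\delta'_i$ for correctness, geometric doubling of $\log n_i$ to control the phase up to $i^*$, and a tail estimate for iterations beyond. One small slip in the tail: you bound the reach probability by $(\delta'_{i^*})^k$ and then claim $\sum_{k\ge 1}(2\delta'_{i^*})^k = \bigo(1)$ ``since $\delta'_{i^*}\le 1/2$'', but that inequality only gives $2\delta'_{i^*}\le 1$, so the geometric sum is not uniformly bounded when $\delta$ is close to~$1$. The cheap fix (and what the paper does) is to keep the doubly-exponential decay you already have in hand: the probability of reaching iteration $i^*+k$ is at most $\delta'_{i^*+k-1}=(\delta'_{i^*})^{2^{k-1}}$, so the tail is $\sum_{k\ge 1}2^k(\delta'_{i^*})^{2^{k-1}}\le \sum_{k\ge 1}2^{k-2^{k-1}}=\bigo(1)$ whenever $\delta'_{i^*}\le 1/2$. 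Your explicit treatment of the out-of-range convention for \textsf{SEARCH} is a useful clarification the paper leaves implicit; note also that your definition $n_{i^*}\ge N$ (rather than strict) creates a harmless off-by-one when $N=n_{i^*}$.
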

\begin{proof}
We first show that the algorithm is correct with probability at least $1-\delta$. Observe that the algorithm does a sequence of searches over increasing range of integer ranges $[n_0], [n_1], \ldots$, where $n_i =  1/\delta_i $ and $\delta_i = (\delta/2)^{2^i}$. If $\textsc{SEARCH}([n],\delta')$ is called with $N \ge n$, the returned value is $n$ with probability at least $1-\delta'$, which continues the main loop of the algorithm. If $N < n$, then with probability at least $1-\delta'$ returned value is $N$, which correctly stops the strategy. By union bound, failure probability of the algorithm is upperbounded by the sum of failure probabilities of all calls to $\textsc{SEARCH}$, which is $\sum_{i=0}^{\infty} \delta_i = \sum_{i=0}^{\infty}  (\delta_0)^{2^i} < \sum_{i=0}^{\infty}  (\delta_0)^{i}  \le 2 \delta_0 = \delta$.

We now bound the expected number of queries. Let $n_j > N$ be chosen such that $j$ is minimal. In the desired execution of algorithm, the last called search is on $[n_j]$, and it follows that either $j=0$ and then $n_j = 1/\delta_0$, or $n_j \le N^2+1$. Thus $\log n_j  = \bigo( \log N + \log 1/\delta)$. Let $C$ be a constant such that the number of queries performed by a call to $\textsc{SEARCH}([n],\delta')$ is upperbounded by $C \cdot (\log n + \log 1/\delta')$.

The expected number of queries is upperbounded by
\begin{align*}
\mathbb{E}[Q] &\le \sum_{i \le j} C \cdot (\log n_i + \log 1/\delta_i) + \sum_{i > j} \delta_{i-1} \cdot C \cdot (\log n_i + \log 1/\delta_i) \\
&\le 2 C\sum_{i \le j}\log n_i + 2 C\sum_{i > j} \delta_{i-1} \log n_i \\
 &\le 4 C \log n_j + 2 C  \sum_{i>0} 2^i  \cdot (\delta_j)^{2^{i-1}} \log n_j  \\
&\le 2 C \left(2 + \sum_{i>0} 2^{i-2^{i-1}} \right)  \log n_j  \\
&\le 10 C \log n_j \\
&= \bigo(\log N + \log 1/\delta),
\end{align*}
where we have used that $\delta_j \le 1/2$.
\end{proof}

\section{Application: Edge Queries in the Prefix-Bounded Model} \label{sec:edge-prefix-bounded}
The model of prefix-bounded errors can be somewhat seen as lying in-between the adversarial linearly bounded and the non-adversarial probabilistic.
It is reflected e.g. in the fact that in binary search the `feasibility' threshold for $r$ changes from $\frac{1}{3}$ in the linearly bounded to $\frac{1}{2}$ in the prefix-bounded model.
We utilize the ideas from~\cite{BorgstromK93} more carefully, adapting the approach to edge queries in general graphs and the prefix-bounded error model.
It turns out that the feasibility threshold for $r$ can be pushed from $\frac{1}{\Delta+1}$ to $\frac{1}{\Delta}$ in this case, while keeping the $\log n$ dependency on the graph size.\footnote{We note that for any $r=1/2(1-\varepsilon)$ there exists a strategy of exponential length $(1/\varepsilon)^{\bigo(\Delta \log n)}$, following from straightforward simulation of error-less strategy by repeating queries.}

For the virtual advance technique that we utilize, in addition to the lie counter $\ell_v$ of a vertex $v$ that we used so far, we introduce a \emph{virtual lie counter}, denoted by $\virtLies{v}$, that is maintained by our strategy given in Algorithm~\ref{str:prunning}.
Whenever a query is made to an edge $\{u,v\}$ and the reply is $u$, then the virtual lie counter of $u$ is incremented by the strategy (note that this reply results in incrementing $\ell_v$ but $\ell_u$ remains the same).
We extend the notation by introducing a \emph{virtual potential} $\virtPhi(v)$ for each node $v$, $\virtPhi(v)=\virtPhi_0(v)\cdot\Gamma^{-(\ell_v+\virtLies{v})}$, where $\virtPhi_0(v)$ is the initial potential (in Algorithm~\ref{str:prunning}, $\virtPhi_0(v)=1$ for all $v$).
Consequently, we define for each edge $e$, $\virtPsi(e)=\sum_{u\in V}\virtPhi(u)\cdot d(e,u)$.
The strategy relies on two constants $\Gamma$ and $H$ that we select while stating our lemmas below.
The values of $C$ and $D$ in Algorithm~\ref{str:prunning} computed in round $t$ of the strategy are denoted during analysis by $C_t$ and $D_t$, respectively.
The goal of Algorithm~\ref{str:prunning} is to trim down the set of potential targets to at most $\bigo(\Delta/\varepsilon)$, where $r=\frac{1}{\Delta}(1-\varepsilon)$.

\begin{center}
\begin{minipage}{.7\linewidth}
\begin{algorithm}[H]
\SetAlgoRefName{PRUNING}
	\caption{Edge queries for the prefix-bounded model}
	\label{str:prunning}
	\For{ $v \in V$ }
	{
		$\virtPhi(v) \gets 1$\;
		$\ell_v \gets 0$\;
		$\virtLies{v} \gets 0$
	}
	\Do{$|C|>1$}
	{
		$e \gets \arg \min_{x \in E} \virtPsi(x)$\;
		query $e$ with an answer $w$\;
		\For{all nodes $u$ not compatible with the answer}
		{
			$\ell_u \gets \ell_u + 1$\;
			$\virtPhi(u) \gets \virtPhi(u)/\Gamma$
		}
		$\virtLies{w} \gets \virtLies{w}+1$\;
		$\virtPhi(w) \gets \virtPhi(w)/\Gamma$\;
                $D \gets \{ u : \virtLies{u} \ge t/H\}$\;
                $C \gets \{ u \in V \setminus D : \ell_u \le r \cdot t\}$.
	}
	\Return $C\cup D$
\end{algorithm}
\end{minipage}
\end{center}

\begin{theorem}
\label{th:edge_prefix_bounded}
Algorithm~\ref{str:prunning} with $H=2\Delta \varepsilon^{-1}$ and $\Gamma = 1 + \frac{\Delta}{2(\Delta-1)} \varepsilon$ in $Q_{0} = \bigo(\Delta \varepsilon^{-2} \log n )$ edge queries returns a set $D$ of size $\bigo(\Delta \varepsilon^{-1})$ of possible target candidates in the prefix-bounded error model with $r=\frac{1}{\Delta}(1-\varepsilon)$.
\end{theorem}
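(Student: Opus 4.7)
The plan is to track the total virtual potential $\virtPhi_t(V) \defeq \sum_{v\in V} \virtPhi_t(v)$, which by the update rule equals $\sum_v \Gamma^{-\ell_v - \virtLies{v}}$ and hence starts at $n$. The argument decomposes into three ingredients: (i) a uniform bound on $|D_t|$; (ii) a per-query multiplicative decrease of $\virtPhi_t(V)$; and (iii) a lower bound on $\virtPhi_t(V)$ from the candidates in $C_t$. Combining (ii) and (iii) yields the bound on $t$, and (i) gives the size bound on the returned set. Correctness (i.e., $v^*\in C\cup D$) follows from the prefix-bounded assumption $\ell_{v^*}\le rt$: either $\virtLies{v^*} < t/H$, so $v^*\in C_t$, or $\virtLies{v^*}\ge t/H$, so $v^*\in D_t$.

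For (i), each iteration increments $\virtLies{w}$ for exactly one $w$, so $\sum_v \virtLies{v} = t$ after $t$ rounds. By pigeonhole $|D_t|\le H = 2\Delta/\varepsilon$, and since the loop exits once $|C|\le 1$, the returned set has size at most $1+H = \bigo(\Delta/\varepsilon)$.

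For (ii), consider a query to $e=\{q,w\}$ with reply $w$ in round $t+1$. Only vertices in $\overline{N(e,w)}$ (those incompatible with the answer) and the vertex $w$ itself (via the virtual-lie increment) have their potential divided by $\Gamma$; all other vertices of $N(e,w)$ are untouched. Hence
\[
\virtPhi_{t+1}(V) \;=\; \virtPhi_t(V) - \bigl(1-\tfrac{1}{\Gamma}\bigr)\bigl[\virtPhi_t(w) + \virtPhi_t(\overline{N(e,w)})\bigr].
\]
Since $e$ minimizes $\virtPsi$ over all edges and the proof of Lemma~\ref{lem:degBound} uses only edge-minimality and nonnegativity of the weights, applying it to the weight $\virtPhi$ in place of $\mu$ gives $\virtPhi_t(\overline{N(e,w)}) \ge (\virtPhi_t(V) - \virtPhi_t(w))/\deg(w)$. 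Adding $\virtPhi_t(w)$ to both sides yields $\virtPhi_t(w) + \virtPhi_t(\overline{N(e,w)}) \ge \virtPhi_t(V)/\deg(w) \ge \virtPhi_t(V)/\Delta$ (the case $\deg(w)=1$ is immediate since then $N(e,w)=\{w\}$). Therefore $\virtPhi_t(V) \le n \cdot \bigl(1 - (\Gamma-1)/(\Gamma\Delta)\bigr)^t \le n\exp\!\bigl(-t(\Gamma-1)/(\Gamma\Delta)\bigr)$.

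For (iii), while the loop runs $|C_t|\ge 2$, and each $u\in C_t$ satisfies $\ell_u\le rt$ and $\virtLies{u} < t/H$, so $\virtPhi_t(u) \ge \Gamma^{-t(r+1/H)}$ and thus $\virtPhi_t(V) \ge 2\Gamma^{-t(r+1/H)}$. Combining with (ii) and taking logarithms,
\[
t\cdot\Bigl[\tfrac{\Gamma-1}{\Gamma\Delta} - (r+1/H)\log\Gamma\Bigr] \;\le\; \log n - \log 2.
\]
The main obstacle is verifying that with $r=(1-\varepsilon)/\Delta$, $H=2\Delta/\varepsilon$, and $\Gamma=1+\Delta\varepsilon/(2(\Delta-1))$, the bracket is of order $\varepsilon^2/\Delta$. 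A Taylor expansion in $\varepsilon$ shows that the $\bigo(\varepsilon)$ contributions of the two terms exactly cancel (this cancellation is what forces the stated values of $\Gamma$ and $H$), leaving a bracket of $\Theta(\varepsilon^2/\Delta)$, and hence $t = \bigo(\Delta\varepsilon^{-2}\log n)$ as claimed.
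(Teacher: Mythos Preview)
Your argument follows the paper's proof closely: both apply Lemma~\ref{lem:degBound} to the virtual weights to get the per-round contraction factor $1-\frac{\Gamma-1}{\Gamma\Delta}$, both bound $|D_t|\le H$ by pigeonhole on $\sum_v \virtLies{v}=t$, and both finish by comparing the contraction with the lower bound $\virtPhi_t(u)\ge \Gamma^{-t(r+1/H)}$ for $u\in C_t$. Your explicit correctness check ($v^*\in C\cup D$) is a nice addition.

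There is, however, one technical slip in step~(ii) that damages the final asymptotic. You pass from $(1-x)^t$ to $e^{-xt}$ with $x=\frac{\Gamma-1}{\Gamma\Delta}$, producing the bracket $\frac{\Gamma-1}{\Gamma\Delta}-(r+1/H)\ln\Gamma$. But this throws away a term of size $\frac{x^2}{2}=\Theta\!\bigl(\varepsilon^2/(\Delta-1)^2\bigr)$, which is of the \emph{same} order as the bracket itself. Indeed, carrying out the Taylor expansion with your parameters gives
\[
\frac{\Gamma-1}{\Gamma\Delta}-(r+1/H)\ln\Gamma \;=\; \frac{(\Delta-2)\,\varepsilon^2}{8(\Delta-1)^2}+O(\varepsilon^3),
\]
which is $\Theta(\varepsilon^2/\Delta)$ only for $\Delta\ge 3$; for $\Delta=2$ the $\varepsilon^2$-coefficient vanishes and you obtain just $O(\varepsilon^3)$, hence $Q_0=O(\varepsilon^{-3}\log n)$ rather than the claimed $O(\varepsilon^{-2}\log n)$. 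The fix is immediate: do not invoke $(1-x)\le e^{-x}$; instead keep $-\ln\!\bigl(1-\frac{\Gamma-1}{\Gamma\Delta}\bigr)$ in the bracket (equivalently, analyze the product $\bigl(1-\frac{\Gamma-1}{\Gamma\Delta}\bigr)\Gamma^{r+1/H}$ directly, as the paper does). Then the bracket becomes $\frac{\varepsilon^2}{8(\Delta-1)}+O(\varepsilon^3)$ uniformly for all $\Delta\ge 2$, and the bound $Q_0=O(\Delta\varepsilon^{-2}\log n)$ follows.
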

\begin{proof}
Denote $\varepsilon' = \varepsilon/2$ and $r' = \frac{1}{\Delta}(1-\varepsilon')$.
Note that $\Gamma = 1 + \frac{\Delta}{\Delta-1} \cdot \varepsilon'$ and $H = 1/(r'-r) = \Delta/\varepsilon'$.
We prove that in at most
\begin{equation} \label{eq:Qmax}
Q_{0} = \left(8(\Delta-1)\varepsilon^{-2} + \bigo(\varepsilon^{-1})\right) \ln n
\end{equation}
edge queries Algorithm \ref{str:prunning} terminates.

If an edge $e=\{u,v\}$ is a $1$-median with respect to $\virtPsi$ and $\deg(u)>1$, where $u$ is the reply to the query in round $t+1$, then by Lemma~\ref{lem:degBound} applied to the minimizer $e$ of $\virtPsi$,
\begin{equation} \label{eq:virtPhi}
\virtPhi_t(\overline{N(e,u)}) \geq \frac{1}{\deg(u)} ( \virtPhi_t - \virtPhi_{t}(u) ).
\end{equation}
Note that if $\deg(u)=1$, then $\virtPhi_t(\overline{N(e,u)})=\virtPhi_t(V\setminus\{u\})=\virtPhi_t-\virtPhi_t(u)$, which implies that in this case~\eqref{eq:virtPhi} also holds.
Hence we obtain from~\eqref{eq:virtPhi}:
\[\virtPhi_t(\overline{N(e,u)} \cup \{u\}) \geq \frac{1}{\Delta} \virtPhi_t.\]
Thus, in each round, the decrease in the virtual potential is as follows:
\begin{align*}
\virtPhi_{t+1}  &= \virtPhi_t(N(e,u)\setminus\{u\}) + \frac{1}{\Gamma}\virtPhi_t(\overline{N(e,u)}\cup\{u\}) \\
                &= \virtPhi_t - \frac{\Gamma-1}{\Gamma}\virtPhi_t(\overline{N(e,u)}\cup\{u\}) \\
                &\leq \left( \frac{1}{\Gamma\Delta} + \frac{\Delta-1}{\Delta}\right)\virtPhi_t.
\end{align*}
Since the initial virtual potential is $\virtPhi_0=n$, this implies
\begin{equation} \label{eq:Qmax0}
\virtPhi_{Q_{0}} \leq n\cdot\left( \frac{1}{\Gamma\Delta} + \frac{\Delta-1}{\Delta}\right)^{Q_{0}}.
\end{equation}
Observe that
\begin{equation}
\left(\frac{1}{\Gamma\Delta}+\frac{\Delta-1}{\Delta}\right)\cdot\Gamma^{r}\cdot \Gamma^{1/H} = 1 - \frac{\varepsilon^2}{8(\Delta-1)} + \bigo(\varepsilon^3).
\end{equation}
Thus, by~\eqref{eq:Qmax} and~\eqref{eq:Qmax0}, the total virtual potential after $Q_{0}$ queries is at most
\begin{equation} \label{eq:Q-Gamma}
\virtPhi_{Q_{0}} \le n \cdot \left(\frac{1 - \frac{\varepsilon^2}{8(\Delta-1)}+\bigo(\varepsilon^3)}{\Gamma^{r+1/H}}\right)^{Q_{0}}
                = \Gamma^{-{Q_{0}}(r+1/H)}.
\end{equation}

Denote for brevity $D'=D_{Q_{0}}$.
Since we had $Q_{0}$ rounds and in each round the virtual potential of exactly one vertex increases, there are at most $H$ discarded vertices in $D'$.
For all other vertices in $V \setminus D'$, the virtual lie counter does not exceed $Q_{0}/H$ according to Algorithm~\ref{str:prunning}.
Thus, by~\eqref{eq:Q-Gamma},
\[\Phi_{Q_{0}}(V\setminus D') = \sum_{v\in V\setminus D'} \Phi_{Q_{0}}(v) \leq \Gamma^{Q_{0}/H} \cdot \sum_{v\in V\setminus D'} \virtPhi_{Q_{0}}(v) \leq \Gamma^{Q_{0}/H} \cdot \virtPhi_{Q_{0}} \leq \Gamma^{-Q_{0}r}.\]
This means that there is at most one vertex $v \in V \setminus D'$ such that $\ell_v \le r \cdot Q_{0}$.
Thus, Algorithm~\ref{str:prunning} indeed terminates in at most $Q_{0}$ rounds.
Additionally, in any round $t$, $|D_t| \le H$, which proves our claim.
\end{proof}

\begin{corollary}
\label{cor:range_prefix_bounded}
In the prefix-bounded error model with $r=\frac{1}{2}(1-\varepsilon)$, the target in an integer domain can be found in $\bigo( \varepsilon^{-4 }\log n )$ binary queries.
\end{corollary}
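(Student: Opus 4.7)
Plan: The corollary follows from Theorem~\ref{th:edge_prefix_bounded} applied to the special case of a path. Since the integer domain $[n]$ with binary (comparison) queries corresponds exactly to edge queries on a path ($\Delta=2$), invoking Algorithm~\ref{str:prunning} with rate $r=\tfrac12(1-\varepsilon)$ uses, by Theorem~\ref{th:edge_prefix_bounded}, a first phase of $Q_0=\bigo(\varepsilon^{-2}\log n)$ comparisons and returns a candidate set $\Sigma$ of size $m=\bigo(1/\varepsilon)$ that is guaranteed to contain the target.

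To finish, I would iterate Algorithm~\ref{str:prunning} restricted to the sub-path induced by $\Sigma$. The key subtlety is that the prefix-bounded guarantee is stated over the entire strategy, so the adversary may concentrate its unspent budget of up to $r(Q_0+Q_1)$ lies at the very start of phase 2 of length $Q_1$; in particular the restart is not itself a prefix-bounded instance of rate $r$. To neutralize this I would pad phase 2 so that the effective rate $r(Q_0+Q_1)/Q_1$ stays below $\tfrac12-\Omega(\varepsilon)$, which forces $Q_1=\Theta(\varepsilon^{-1}Q_0)$. Carrying the lie counters forward from phase 1 (so Theorem~\ref{th:edge_prefix_bounded}'s virtual-advance analysis applies to the global sequence, not to a fictional restart), a new application of Algorithm~\ref{str:prunning} on $\Sigma$ within phase 2 again produces at most $\bigo(1/\varepsilon)$ candidates.

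Since one pass still leaves $\bigo(1/\varepsilon)$ candidates, the pad-and-prune step must be iterated. Each further iteration must be padded by an additional $\bigo(1/\varepsilon)$ factor to swallow the accumulated length and keep the effective rate below $\tfrac12$. After $\bigo(\log m)=\bigo(\log(1/\varepsilon))$ iterations only a single candidate remains; the per-phase lengths form a geometric series dominated by its last term, which is $\bigo(\varepsilon^{-4}\log n)$. Adding the $\bigo(\varepsilon^{-2}\log n)$ cost of phase 1 does not change the asymptotics.

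The main obstacle is the cross-phase bookkeeping of the adversary's lie budget: at each phase boundary we must retain the cumulative real-lie and virtual-lie counters so that the analysis of Theorem~\ref{th:edge_prefix_bounded} remains valid when the whole strategy is viewed as one long prefix-bounded process, and we must verify that the chosen padding keeps each phase's effective rate strictly below $\tfrac12$. Once these details are in place, the bound $\bigo(\varepsilon^{-4}\log n)$ follows by summing the per-phase query counts.
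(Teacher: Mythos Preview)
Your first phase is exactly right: Algorithm~\ref{str:prunning} on the path ($\Delta=2$) with rate $r=\tfrac12(1-\varepsilon)$ uses $Q_0=\bigo(\varepsilon^{-2}\log n)$ comparisons and, by Theorem~\ref{th:edge_prefix_bounded}, leaves a candidate set of size $\bigo(1/\varepsilon)$.

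The gap is in the second phase. You propose to iterate Algorithm~\ref{str:prunning} on the surviving candidates and assert that after $\bigo(\log(1/\varepsilon))$ rounds of this a single vertex remains. But nothing in your outline explains why the candidate count ever drops. The bound $|D|\le H=2\Delta/\varepsilon$ in Theorem~\ref{th:edge_prefix_bounded} depends only on the parameters $\Delta,\varepsilon$, not on the size of the current search space; running \textsc{PRUNING} on a sub-path of $\bigo(1/\varepsilon)$ vertices again yields up to $H=\bigo(1/\varepsilon)$ vertices with virtual-lie count exceeding the threshold, so the output is no smaller than the input. Carrying counters forward does not rescue this: the target itself may sit permanently in $D$ (it accrues a virtual lie each time it is the correct answer to a query on an incident edge), and the analysis offers no mechanism that forces vertices out of $D$. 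Your geometric-sum accounting is therefore built on an unproved, and in fact false, shrinkage assumption.

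The paper closes the argument with a different tool. Once $|C\cup D|=\bigo(1/\varepsilon)$, an \emph{error-free} binary search among those candidates needs only $Q'\le 1+\log_2|D|=\bigo(\log(1/\varepsilon))$ comparisons. This short error-free strategy is then simulated in the prefix-bounded model via the repetition argument of Aslam--Dhagat, which converts the whole run into one of length $\bigo\big(Q_0\cdot\tfrac{1}{1-2r}\cdot(\tfrac{1}{1-r})^{Q'}\big)=\bigo(\varepsilon^{-2}\log n\cdot\varepsilon^{-1}\cdot\varepsilon^{-1})=\bigo(\varepsilon^{-4}\log n)$. The essential point your plan is missing is that the finishing phase must be a method that provably narrows to a single vertex; \textsc{PRUNING} alone is not such a method.
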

\begin{proof}
We first use the strategy described from Theorem~\ref{th:edge_prefix_bounded} to reduce, in $Q_{0} = \bigo(\varepsilon^{-2} \log n)$ rounds, the set of potential targets to $C\cup D$, where $|C|=1$ and $|D| = \bigo(\varepsilon^{-1})$.
In case of no further errors, $C\cup D$ can be then reduced in $Q' \le 1+\log_2 |D|$ queries to a single target.
The final strategy can be simulated as described in \cite{aslam1995noise}, giving the total strategy length of
$\bigo( Q_{0} \cdot \frac{1}{1-2r} \cdot (\frac{1}{1-r})^{Q'})$. Since $\frac{1}{1-r} \le 2$ and $\frac{1}{1-2r} = \varepsilon^{-1}$, this results in $\bigo( \varepsilon^{-4}\log n)$ edge queries, as claimed.
\end{proof}

We note that the simulation argument from \cite{aslam1995noise} requires that for any queried edge $e = \{u,v\}$ and a set of potential targets $D$, it holds $D \subseteq N_{<}(e,v) \cup N_{<}(e,w)$. This is always true e.g. in bipartite graphs regardless of $D$.

To obtain our results for the prefix-bounded error model and general graphs, we use the `trimming' phase provided by Theorem~\ref{th:edge_prefix_bounded} which is then followed by a simulation argument.
The latter requires an error-less strategy whose queries are then repeated e.g. for majority testing.
The theorem below provides such an edge search strategy for an arbitrary graph.

\begin{theorem}
\label{th:edge_errorless}
There exists a strategy that in absence of errors finds the target in at most $\frac{\log(n/\Delta)}{\log(\Delta/(\Delta-1))} + \Delta$ edge queries in any $n$-node graph of max-degree $\Delta$.
\end{theorem}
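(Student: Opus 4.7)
The plan is to maintain a candidate set $S \subseteq V$ initialized to $V$, track it through the (error-free) execution, and split the strategy into two phases: a multiplicative-reduction phase that drives $|S|$ below a threshold of order $\Delta$, and a linear-sweep phase that then removes the remaining candidates one at a time.

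For the first phase, at each step I place the uniform weight $\mu(w) = \mathbf{1}[w \in S]$ and query an edge $e = \{u,v\}$ minimizing the induced $\Phi$. Applying Lemma~\ref{lem:degBound} at each endpoint in turn (using $\mu_t = |S|$, $\mu_t(u), \mu_t(v) \le 1$, and $\deg(u), \deg(v) \le \Delta$) yields
\[
|N_<(e,v)\cap S| \;\ge\; \frac{|S|-1}{\Delta}, \qquad |N_<(e,u)\cap S| \;\ge\; \frac{|S|-1}{\Delta},
\]
so whichever endpoint the responder returns, the updated candidate set $S'$ satisfies $|S'|-1 \le \tfrac{\Delta-1}{\Delta}(|S|-1)$. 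Iterating this contraction from $|S|=n$ down to $|S| \le \Delta + 1$ uses at most $\frac{\log((n-1)/\Delta)}{\log(\Delta/(\Delta-1))} \le \frac{\log(n/\Delta)}{\log(\Delta/(\Delta-1))}$ queries. For the second phase, whenever $|S| \ge 2$ I pick any two distinct $s_1, s_2 \in S$ and query the first edge $\{s_1,x\}$ on a shortest $s_1$-$s_2$ path; since $s_1 \in N_<(\{s_1,x\},s_1)$ and $s_2 \in N_<(\{s_1,x\},x)$, each response removes at least one of the two, so $|S|$ reaches $1$ after at most $\Delta$ further queries.

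The main obstacle is that Lemma~\ref{lem:degBound} only delivers the $(\Delta-1)/\Delta$ contraction at an endpoint of degree at least $2$, whereas the chosen median $e = \{u,v\}$ may have $\deg(u) = 1$. I would resolve this via a tie-breaking rule among $1$-edge-medians: when $\deg(u)=1$ and $u \notin S$, one has $d(u,s) = d(v,s) + 1$ for every $s \in S$, hence $\Phi(\{u,v\}) = \sum_{s \in S} d(v,s)$, and consequently $\Phi(\{v,w\}) \le \Phi(\{u,v\})$ for every edge $\{v,w\}$ adjacent to $e$; any such $\{v,w\}$ is therefore also a $1$-edge-median, and iterating this swap reaches a median both of whose endpoints either have degree at least $2$ or lie in $S$. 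In the remaining benign case where the chosen median has a degree-$1$ endpoint that belongs to $S$, each response either identifies the target immediately or removes that endpoint from $S$, an event that can occur at most $\Delta$ times before the Phase-$2$ threshold is reached. Summing the two phases yields the claimed $\frac{\log(n/\Delta)}{\log(\Delta/(\Delta-1))} + \Delta$ upper bound.
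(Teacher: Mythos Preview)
Your two-phase structure and your Phase~2 argument (query the first edge on a shortest $s_1$--$s_2$ path, guaranteed to eliminate one of $s_1,s_2$) are perfectly fine; the latter is in fact more explicit than the paper's one-line appeal to Algorithm~\ref{str:graph_edges_fixed}. The genuine gap is in Phase~1, in your handling of a median edge $e=\{u,v\}$ with $\deg(u)=1$ and $u\in S$. You assert that this ``benign'' case can occur at most $\Delta$ times before the Phase~2 threshold is reached, but you give no argument for that bound, and there is no obvious one: a graph may have many leaves, and nothing you have said prevents the chosen median from repeatedly landing on such a leaf. If those rounds only subtract one candidate each, your total could blow up far beyond the stated bound.

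The fix---and this is exactly what the paper does---is to stop trying to apply Lemma~\ref{lem:degBound} at both endpoints in advance, and instead apply it only at the \emph{answered} endpoint $q$. If $\deg(q)\ge 2$, Lemma~\ref{lem:degBound} gives $|S\cap\overline{N(e,q)}|\ge(|S|-1)/\Delta$ directly. If $\deg(q)=1$, then $N(e,q)=\{q\}$, so $|S\cap\overline{N(e,q)}|=|S|-[q\in S]\ge |S|-1\ge(|S|-1)/\Delta$ trivially. In either case the contraction $|S'|-1\le\frac{\Delta-1}{\Delta}(|S|-1)$ holds for \emph{every} Phase~1 round, your tie-breaking rule and the separate degree-$1$ accounting become unnecessary, and the query count closes exactly as you wrote.
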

\begin{proof}
We use Algorithm~\ref{str:graph_edges_fixed} with a simplification of taking $\Gamma \to \infty$.
Thus we have $\Phi(v) \in \{0,1\}$ and these occur for $\ell_v > 0$ and $\ell_v = 0$, respectively.
Let $S_t$ be the set of potential targets after $t$ queries.
Note that $S_t=\{v \mid \Phi_t(v)=1\}$.
By Lemma~\ref{lem:degBound}, it follows that in any step querying an edge $e_q$ with an answer $q$, the discarded set of targets satisfies
$$|S_t \cap \overline{N(e_q,q)}| \geq  \frac{1}{\Delta} |  S_t \setminus \{q\} | \geq \frac{1}{\Delta} (|S_t|-1).$$
Thus, $|S_{t+1}| = |S_t \cap N(e_q,q)| \leq |S_t| - \frac{1}{\Delta} (|S_t|-1)$. From $|S_{t+1}|-1 \le (|S_t|-1) \cdot \frac{\Delta-1}{\Delta}$ we deduce that it takes at most $\left\lceil\frac{\log(n/\Delta)}{\log(\Delta/(\Delta-1))}\right\rceil$ queries to reduce target set to size at most $\Delta$, and then another $\Delta-1$ queries to reduce it to a single target.
\end{proof}

\begin{theorem}
\label{cor:edge_prefix_bounded}
In the prefix-bounded error model with $r = \frac{1}{\Delta}(1-\varepsilon)$, the target can be found in $\varepsilon^{-\bigo(\Delta)} \log n $ edge queries in general graphs.
\end{theorem}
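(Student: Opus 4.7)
The plan is to follow the same two-phase template used in Corollary~\ref{cor:range_prefix_bounded}: first prune the candidate set using Theorem~\ref{th:edge_prefix_bounded}, then simulate an error-less strategy on the surviving candidates by the virtual-advance technique of \cite{aslam1995noise}.

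Concretely, I would first invoke Theorem~\ref{th:edge_prefix_bounded} directly. After $Q_0 = \bigo(\Delta \varepsilon^{-2}\log n)$ edge queries, the set of potential targets has been trimmed to $C\cup D$ with $|C|=1$ and $|D|=\bigo(\Delta/\varepsilon)$. Next, by Theorem~\ref{th:edge_errorless} applied to the reduced candidate set of size $|C\cup D| = \bigo(\Delta/\varepsilon)$, there is an error-less edge-query strategy of length
\[
Q' \;=\; \frac{\log(|C\cup D|/\Delta)}{\log(\Delta/(\Delta-1))} + \Delta \;=\; \bigo(\Delta\log(1/\varepsilon))
\]
that identifies the target. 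I would then simulate this strategy under the prefix-bounded model, replacing each of the $Q'$ error-less queries with a short block of actual queries that majority-vote out the adversarial errors. As in the proof of Corollary~\ref{cor:range_prefix_bounded}, this yields a multiplicative blow-up of $\bigo\!\left(\tfrac{1}{1-2r}\cdot\bigl(\tfrac{1}{1-r}\bigr)^{Q'}\right)$ on top of $Q'$. Substituting $r=(1-\varepsilon)/\Delta$ and $Q' = \bigo(\Delta\log(1/\varepsilon))$ gives a factor of $\varepsilon^{-\bigo(\Delta)}$, and adding the cost $Q_0$ of the trimming phase yields $\varepsilon^{-\bigo(\Delta)}\log n$ in total.

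The main obstacle is precisely the caveat stated in the remark following Corollary~\ref{cor:range_prefix_bounded}: the Aslam--Dhagat simulation presumes $D \subseteq N_<(e,u)\cup N_<(e,v)$ for every queried edge $e=\{u,v\}$, a property automatic in bipartite graphs but not in arbitrary ones, since a candidate may be equidistant from both endpoints. I would handle this by refining the simulated strategy so that any candidate equidistant from both endpoints of the queried edge is retained irrespective of the answer, and by verifying inductively that this extra bookkeeping costs only a constant factor per simulated level, which is then absorbed into the $\varepsilon^{-\bigo(\Delta)}$ blow-up. Showing that this refinement does not break the correctness of the virtual-advance argument --- in particular, that only $\bigo(\Delta)$ extra levels can be introduced by equidistant ties per genuine level --- is the step that requires the most care.
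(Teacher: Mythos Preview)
Your two-phase plan (prune via Theorem~\ref{th:edge_prefix_bounded}, then simulate an error-less strategy from Theorem~\ref{th:edge_errorless}) matches the paper's. The divergence is in which simulation you invoke and how you handle the bisection obstruction.

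The paper does not attempt to salvage the Aslam--Dhagat virtual-advance simulation for general graphs. Instead it splits on $\Delta$. For $\Delta\ge 3$ it uses the plain majority-repetition simulation of Pelc~\cite{Pelc89}, repeating each of the $Q'$ error-less queries sufficiently often and obtaining total cost $\bigo\bigl(Q_0\cdot(1/(1-2r))^{Q'}\bigr)$; crucially, majority voting needs no bisection hypothesis at all, and since $\Delta\ge 3$ forces $r<1/3$, one has $1/(1-2r)\le 3$ and the blow-up is $3^{\bigo(\Delta\log\varepsilon^{-1})}=\varepsilon^{-\bigo(\Delta)}$. Only for $\Delta=2$ does the paper fall back on Aslam--Dhagat: paths and even cycles are bipartite, so the hypothesis $D\subseteq N_<(e,u)\cup N_<(e,v)$ holds automatically and Corollary~\ref{cor:range_prefix_bounded} applies; odd cycles get a bespoke argument exploiting that each edge has exactly one equidistant vertex, which is temporarily discarded and later reintroduced, with a further iteration to keep the reintroduced set of size $\bigo(1)$.

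Your proposal instead runs Aslam--Dhagat uniformly and patches the bisection failure by ``retaining equidistant candidates irrespective of the answer.'' This is the gap. The virtual-advance analysis tracks a single position in the error-less strategy tree and relies on the invariant that a truth moves you one step toward the target's leaf and a lie one step away; once the target can be equidistant at some tree node, neither answer there is a lie, the path to the target's leaf is no longer determined by truthful answers, and the truths-minus-lies versus depth correspondence breaks. Your assertion that the damage is bounded by ``$\bigo(\Delta)$ extra levels per genuine level'' is neither argued nor, as far as I can see, easy to establish for arbitrary graphs, where the equidistant set of an edge can be large and unstructured. The paper's case split sidesteps the issue entirely rather than repairing it. A minor separate point: in the paper's accounting (both for~\cite{aslam1995noise} and~\cite{Pelc89}) the blow-up factor multiplies $Q_0$ rather than $Q'$, since the pruning phase supplies the prefix slack for the second phase; this does not change your final asymptotic bound, but your ``on top of $Q'$'' does not match the cited argument.
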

\begin{proof}
Suppose that $D$ is a set of potential targets, i.e., the target $v$ belongs to $D$.
By Theorem~\ref{th:edge_errorless}, there exists a strategy (for the error-less case) with at most $Q' \le \frac{\log (|D|/\Delta)}{\log(\Delta/(\Delta-1))}+\Delta$ edge queries that finds the target $v$.

First assume that $\Delta \ge 3$.
It follows immediately from the simulation argument from~\cite{Pelc89} (in which one repeats multiple times a query of another strategy taking majority answer in each case --- here we use the error-less strategy of length $Q'$ from Theorem~\ref{th:edge_errorless}) that there exists a strategy terminating in $\bigo(Q_{0} \cdot (1/(1-2r))^{Q'}) = \bigo(\varepsilon^{-2}\Delta \log n) \cdot (1/(1-2r))^{\bigo(\Delta \log \varepsilon^{-1})}$ 
edge queries, where $Q_0$ is the length of the strategy produced by Algorithm~\ref{str:prunning}.
Note that the value of $Q_{0}$ comes from Theorem~\ref{th:edge_prefix_bounded}. Since $\Delta \ge 3$, $1/(1-2r) \le 3$, the claimed bound immediately follows.

For $\Delta=2$, the only cases not covered by Corollary~\ref{cor:range_prefix_bounded} are in fact odd-length cycles. We deal with them as follows. The initial sequence of queries is done as previously --- by executing Algorithm~\ref{str:prunning}, reducing the set of potential targets to $D$ at the cost of $Q_{0}$ rounds. We now observe, that for any edge $e=\{u,v\}$, there is a \emph{single} vertex $v_e$ such that $d(u,v_e) = d(v,v_e)$.
Thus we can consider the following error-less strategy applied to the set of potential targets $D$: query edges according to an error-less edge strategy (as in Theorem~\ref{th:edge_errorless}) and for each queried edge $e$, \emph{discard} the vertex $v_e$ from the set of potential targets.
At the end of this strategy, reintroduce all discarded vertices.
This strategy can be simulated as in  \cite{aslam1995noise}, since we always make sure to maintain the property of properly bisecting the set of targets.
Thus, our initial $Q_0 = \bigo(\varepsilon^{-2} \log n)$ rounds and $D_0=|D| = \bigo(\varepsilon^{-1})$ targets give that this strategy has length $Q_1 = \bigo(Q_0 \cdot \varepsilon^{-1} \cdot 2^{\log_2 D_0}) = \bigo(Q_0 D_0 \varepsilon^{-1})$ and results in $D_1 \le 2 + \log_2 D_0$ targets.
Iterating this procedure would give us a strategy of length $\varepsilon^{-\bigo(\log^* \varepsilon^{-1})} \log n$.
To improve its length by getting rid of the non-constant exponent, denote by $E_0$ set of edges queried during the transition from $D_0$ to $D_1$. Since the strategy is basically a binary search, there are $\bigo(1)$ pairs of edges in $E_0$ that share an endpoint, and there are $\bigo(1)$ of pairs of vertices in $D_1$ that share an edge. Thus, the querying strategy of reducing $D_1$ to $D_2$ can always, except for $\bigo(1)$ queries, choose an edge $e$ to be queried so that $v_e \not\in D_1$. Thus $D_2 = \bigo(1)$, and the proof concludes.
\end{proof}

\begin{corollary}
\label{th:prefix_bound_unbounded}
In the prefix-bounded error model with $r = \frac12(1-\varepsilon)$, $0<\varepsilon\leq 1$, the target integer in an unbounded integer domain can be found in $\bigo(\varepsilon^{-4} \log N)$ binary queries.
\end{corollary}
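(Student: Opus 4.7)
The plan is to combine the two ideas already present in the paper: the reweighting $\virtPhi_0(n) = n^{-2}$ used in Corollaries~\ref{th:unbounded_fixed_vert} and~\ref{th:unbounded_linearly-bounded} to handle unbounded domains, with the two-phase (pruning plus simulation) structure of Corollary~\ref{cor:range_prefix_bounded}. The target complexity $\bigo(\varepsilon^{-4}\log N)$ would then decompose, exactly as in the proof of Corollary~\ref{cor:range_prefix_bounded}, into a pruning phase of $\bigo(\varepsilon^{-2}\log N)$ queries followed by a simulation phase that adds a multiplicative $\bigo(\varepsilon^{-2})$ overhead.

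For the pruning phase I would run Algorithm~\ref{str:prunning} on the infinite path $\mathbb{N}$ (so $\Delta=2$), initializing $\virtPhi_0(n) = n^{-2}$ so that the total initial virtual potential is $\sum_{n\ge 1} n^{-2} = \pi^2/6$ instead of $n$. The per-round decay $\virtPhi_{t+1}\le\bigl(\tfrac{1}{2\Gamma}+\tfrac12\bigr)\virtPhi_t$ proved in Theorem~\ref{th:edge_prefix_bounded} goes through verbatim, because $\virtPsi(e)=\sum_n n^{-2}d(e,n)$ is finite on every edge and tends to infinity on edges going to infinity, so the $1$-edge-median is always attained at a finite vertex. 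Using that the target $N$ satisfies $\ell_N\le rt$ (by prefix-boundedness) and $\virtLies{N}< t/H$ (as long as $N\not\in D$), the bound $\virtPhi_t(N)\ge N^{-2}\Gamma^{-(r+1/H)t}$ holds throughout. Choosing $\Gamma = 1+\Theta(\varepsilon)$ and $H=\Theta(\varepsilon^{-1})$ as in Theorem~\ref{th:edge_prefix_bounded} and equating the two bounds gives a pruning budget $Q_0 = \bigo(\varepsilon^{-2}\log N)$ after which the candidate set $C\cup D$ has size $\bigo(\varepsilon^{-1})$.

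Once the candidate set has size $\bigo(\varepsilon^{-1})$, the simulation argument of~\cite{aslam1995noise} applied exactly as in Corollary~\ref{cor:range_prefix_bounded}---that is, an error-less binary search over $C\cup D$ of depth $\bigo(\log\varepsilon^{-1})$ simulated at a multiplicative cost $\tfrac{1}{1-2r}\cdot(\tfrac{1}{1-r})^{\bigo(\log\varepsilon^{-1})} = \bigo(\varepsilon^{-2})$---yields the claimed $\bigo(\varepsilon^{-4}\log N)$ binary queries in total.

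The hard part will be justifying termination of the pruning phase. In the bounded case every surviving vertex contributes a uniform lower bound of $\Gamma^{-rQ_0}$ to $\Phi_{Q_0}(V\setminus D)$, so the upper bound $\le \Gamma^{-rQ_0}$ immediately forces $|C|\le 1$. In the unbounded case the contribution of a surviving vertex $v$ is only $v^{-2}\Gamma^{-rQ_0}$, which is strictly smaller than $N^{-2}\Gamma^{-rQ_0}$ when $v>N$, so a priori an adversary might try to keep many far-right vertices alive by spending its lie budget on queries whose truthful answer would eliminate them. The resolution is to observe that the algorithm is forced to query near the bulk of the weight, which by Lemma~\ref{lem:degBound} stays close to $N$, so lies on queries separating $N$ from a candidate $v \gg N$ are bounded by the global lie budget $rQ_0$; turning this into a quantitative statement and absorbing the resulting $\sum_{v>N}v^{-2} = \bigo(N^{-1})$ correction into the constant hidden in $Q_0 = \bigo(\varepsilon^{-2}\log N)$ completes the argument.
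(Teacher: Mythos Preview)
Your two-phase plan --- run Algorithm~\ref{str:prunning} on $\mathbb{N}$ with initial weights $\virtPhi_0(n)=n^{-2}$ to prune to $O(\varepsilon^{-1})$ candidates, then invoke Corollary~\ref{cor:range_prefix_bounded} on the survivors --- is exactly the paper's proof, and the paper is just as terse on the step you flag as hard.

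Your proposed resolution of that step, however, overshoots. No argument about where the $1$-edge-median sits, about lie budgets on queries separating $N$ from far-right candidates, or about absorbing a $\sum_{v>N}v^{-2}$ tail is needed. The relevant observation is simply that the target always satisfies $\ell_N\le rt$, so whenever $N\notin D_t$ its contribution $\Phi_t(N)\ge N^{-2}\Gamma^{-rt}$ alone already meets the upper bound $\Phi_t(V\setminus D_t)\le N^{-2}\Gamma^{-rt}$ coming from the per-round decay; hence either $C_{Q_0}=\{N\}$, or (on the infinite path this is in fact what happens) $N$ is forced into $D_{Q_0}$, and in both cases the target sits inside a set of size at most $H+1=O(\varepsilon^{-1})$, which is all the second phase needs. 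What would genuinely require care --- and what neither your sketch nor the paper makes explicit --- is an $N$-oblivious stopping rule for the pruning phase, since the built-in condition $|C_t|\le 1$ of Algorithm~\ref{str:prunning} is not self-evidently reached within $O(\varepsilon^{-2}\log N)$ rounds in the unbounded setting; your lie-budget argument does not address this point either.
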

\begin{proof}
Set $s=2$ and proceed first with the filtering technique by executing Algorithm~\ref{str:prunning} with $H = \frac{4}{\varepsilon}$ and $\Gamma = 1 + \varepsilon$. Following the proof of Theorem~\ref{th:edge_prefix_bounded}, we observe that a single query reduces the adjusted potential in $V \setminus D$ by a factor $1 + \frac{\varepsilon^2}8 - \bigo(\varepsilon^3)$. After $Q_{0} = (8 \varepsilon^{-2} + \bigo(\varepsilon^{-1}))\cdot \ln (\zeta(s) \cdot N^s)$ queries the potential of the vertices in $V\setminus D$ is reduced from $\zeta(s)$ to $N^{-s}$, meaning that the set $C_{Q_{0}}$ has only one vertex. We apply Corollary~\ref{cor:range_prefix_bounded} to $D_{Q_0} \cup C_{Q_{0}}$, which is of size at most $4 \varepsilon^{-1}+1$.
\end{proof}

\section{Summary of Results}

\renewcommand{\arraystretch}{1.5}

We conclude by grouping all bounds we have obtained in three tables below.
In each case $\varepsilon$ is the relative difference between the assumed upper bound for $r$ or $p$ and this value itself, i.e. in the context of $r<r_{\max}$ (or $p<p_{\max}$ respectively) it satisfies $r = (1-\varepsilon)r_{\max}$ (respectively $p = (1-\varepsilon)p_{\max}$).
For the probabilistic model, $\delta$ is the probability threshold, i.e., the target must be located with probability at least $1-\delta$.
Our results are compared with the best ones known to date.
Keep in mind that for $p=\frac{1}{2}(1-\varepsilon)$, it holds $1-H(p) = \Theta(\varepsilon^2)$.

%Table~\ref{tab:graphs_unbounded} gives the graph-theoretic results, which are mostly new.
\begin{table}[ht!]
\caption{Query complexity in general graphs.}
\label{tab:graphs_unbounded}
{
\footnotesize
\begin{center}
\begin{tabular}{cccccc}
\toprule
\emph{Model}    &    \emph{Queries}  & \emph{Regime}       & \emph{Previous result} & \emph{Our result} &                                     \\
\midrule
\multirow{2}{*}{fixed} & vertex & & - & 
 $\bigo(\log n + \numlies)$ & \eqref{th:vertex_queries_exact}\\
\cline{2-6}
& edge & & - & $\bigo(\Delta(\log n + \numlies))$ & \eqref{th:edges_queries_exact}
\\
\midrule
\multirow{2}{*}{\makecell{linearly\\bounded}} & vertex & $r < \frac{1}{2}$ & - & 
$\frac{\log_2 n}{1-H(r)}$ &  \eqref{th:vertex_lin_bounded}\\
\cline{2-6}
& edge &  $r=\frac{1}{\Delta+1}(1-\varepsilon)$ & - & $2 \varepsilon^{-2} \Delta \ln n $ & \eqref{th:edge_bounded}
\\
\midrule
\makecell{prefix-\\bounded} & edge  & $r = \frac{1}{\Delta}(1-\varepsilon)$ & - & $\left(\frac{1}{\varepsilon}\right)^{\bigo(\Delta)} \log n $ &  \eqref{cor:edge_prefix_bounded} \\
\midrule
\multirow{2}{*}{\makecell{probabi-\\listic}}   &  vertex     & $p<\frac12$  & \makecell[c]{ $\frac{\log_2 n}{1 - H(p)} +$\\$+\frac{1}{1-H(p)}\bigo(\frac1C\log n + C^2 \log \delta^{-1})$ \\ $C=\max(1,(1/2-p)\sqrt{\log \log n})$ \cite{Emamjomeh-Zadeh:2015aa}}   & 
 \makecell[c]{$\frac{\log_2 n + o(n) + \widetilde\bigo(\log \delta^{-1})}{1 - H(p)} $ } & \eqref{th:vertex-probabilistic}\\
\cline{2-6}
& edge & $p=\frac12(1-\varepsilon)$  & - & \makecell[r]{$\bigo(\varepsilon^{-2} \Delta \log \Delta $ \\ $(\log n + \log \delta^{-1}))$} & \eqref{th:edge-probabilistic}
\\
\bottomrule
\end{tabular}
\end{center}
}
\medskip
\end{table}

\begin{table}[ht!]
\caption{Query complexity in linearly ordered integer ranges of length $n$ with comparison queries, i.e., the generalizations of the classical binary search (equivalent to the edge-query model in paths).}
\label{tab:results_bounded}
{
\footnotesize
\begin{center}
\begin{tabular}{ccccccc}
\toprule
\emph{Model}&    \emph{Queries}    &   \emph{Regime}      & \emph{Previous result} & & \emph{Our result} & \\
\midrule
fixed & binary  &  &$\bigo(\log n + \numlies)$ & \makecell{\cite{Aigner96},\\ \cite{Long92}} & $\bigo(\log n+\numlies)$ & \eqref{th:edges_queries_exact} \\
\midrule
\multirow{2}{*}{\makecell{linearly\\bounded}} & binary &  $r=\frac{1}{3}(1-\varepsilon)$ & $8\varepsilon^{-2}\log_2n$ &  \cite{DhagatGW92} & 
$4  \varepsilon^{-2} \ln n$ & \eqref{th:edge_bounded}\\
\cline{2-7}
& ternary & $r <\frac12$ & -  & & $\frac{\log_2 n}{1-H(r)}$ &  \eqref{th:vertex_lin_bounded} \\
\midrule
\makecell{prefix-\\bounded}  & binary & $r=\frac{1}{2}(1-\varepsilon)$ & $\bigo(\textup{poly}(\varepsilon^{-1})\log n)$ & \cite{BorgstromK93}  & 
$\bigo(\varepsilon^{-4}{\log n })$ & \eqref{cor:range_prefix_bounded}
 \\
\midrule
probabilistic    & binary &    $p<\frac12$   & \makecell{$\frac{\log_2 n}{1 - H(p)} +$\\ $+\bigo(\frac{\log \delta^{-1} + \log \log n}{1 - H(p)} )$} & 
 \cite{Ben-OrH08}  &  
 $\bigo(\frac{\log n+ \log \delta^{-1}}{1-H(p)})$ & \eqref{th:edge-probabilistic}\\
\bottomrule
\end{tabular}
\end{center}
}
\medskip
\end{table}

\begin{table}[ht!]
\caption{Query complexity in unbounded integer domain $\mathbb{N} = \{1,2,\ldots\}$ (here $N$ is the value of the unknown target integer).}
\label{tab:results_unbounded}
{
\small
\begin{center}
\begin{tabular}{ccccccc}
\toprule
\emph{Model}   &    \emph{Queries}  & \emph{Regime}        & \emph{Previous result} & & \emph{Our result} & \\
\midrule
fixed & binary & & - & & $\bigo(\log N + \numlies)$ & \eqref{th:unbounded_fixed_vert}\\
\midrule
\multirow{2}{*}{\makecell{linearly\\bounded}} & binary & $r=\frac13(1-\varepsilon)$ & -  & &
$\bigo(\varepsilon^{-2}\log N)$ &  \eqref{th:unbounded_linearly-bounded}\\
\cline{2-7}
& ternary &  $r=\frac12(1-\varepsilon)$ & -  & &$\bigo(\varepsilon^{-2}\log N)$ & \eqref{th:unbounded_linearly-bounded}\\
\midrule
\makecell{prefix-\\bounded}& binary & $r = \frac{1}{2}(1-\varepsilon)$ & $\bigo(\varepsilon^{-3}(N\log^2N)^{\log_2\frac{2}{1+\varepsilon}})$ &  \cite{AslamD91} &   $\bigo(\varepsilon^{-4}\log N)$ &  \eqref{th:prefix_bound_unbounded}\\
\midrule
probabilistic & binary &    $p=\frac12(1-\varepsilon)$     & $\bigo(\text{poly}(\varepsilon^{-1})\log N \log{\delta^{-1}})$ & \cite{aslam1995noise} & $\bigo(\varepsilon^{-2} (\log N + \log \delta^{-1}))$  & \eqref{th:prob_bin_search_unbounded} \\
\bottomrule
\end{tabular}
\end{center}
}
\medskip
\end{table}

\end{document}